\definecolor{shadecolor}{rgb}{0.9,0.9,0.9}
\newcommand{\mathsym}[1]{{}}
\newtheorem{theorem}{Theorem}[section]
\newtheorem{lemma}[theorem]{Lemma}
\newtheorem{claim}[theorem]{Claim}
\newtheorem{fact}[theorem]{Fact}
\newcommand{\cancel}[1]{}
\newcommand{\E}{\ensuremath{\mathbb{E}}}
\renewcommand{\Pr}{\ensuremath{\mathbb{P}}}
\begin{document}

\title{AntiJam: Efficient Medium Access despite\\Adaptive and Reactive Jamming}

\author {
   Andrea Richa$^1$, Christian Scheideler$^2$, Stefan Schmid$^3$, Jin Zhang$^1$\\
   \small $^1$ Computer Science and Engineering, SCIDSE,  Arizona State University\\
       \small Tempe, Arizona, USA; \{aricha,jzhang82\}@asu.edu\\
  \small $^2$ Department of Computer Science, University of
Paderborn, D-33102 Paderborn,
  Germany\\ \small scheideler@upb.de\\
 \small $^3$ Deutsche Telekom Laboratories, TU Berlin, D-10587 Berlin,
  Germany\\ \small stefan@net.t-labs.tu-berlin.de\
}


\maketitle

\begin{abstract}
Intentional interference constitutes a major threat for
communication networks operating over a shared medium where
availability is imperative. Jamming attacks are often simple and
cheap to implement. In particular, today's jammers can perform
physical carrier sensing in order to disrupt communication more
efficiently, specially in a network of simple wireless devices such
as sensor nodes, which usually operate over a single frequency (or a
limited frequency band) and which cannot benefit from the use of
spread spectrum or other more advanced technologies. This article
proposes the medium access (MAC) protocol \textsc{AntiJam} that is
provably robust against a powerful reactive adversary who can jam a
$(1-\varepsilon)$-portion of the time steps, where $\varepsilon$ is
an arbitrary constant. The adversary uses carrier sensing to make
informed decisions on when it is most harmful to disrupt
communications; moreover, we allow the adversary to be adaptive and
to have complete knowledge of the entire protocol history. Our MAC
protocol is able to make efficient use of the non-jammed time
periods and achieves an asymptotically optimal,
$\Theta{(1)}$-competitive throughput in this harsh scenario. In
addition, \textsc{AntiJam} features a low convergence time and has
good fairness properties. Our simulation results validate our
theoretical results and also show that our algorithm manages to
guarantee constant throughput where the 802.11 MAC protocol
basically fails to deliver any packets.
\end{abstract}

%

\section{Introduction}\label{sec:introduction}

Disruptions of the communications over a shared medium---either
because of interference of concurrent transmissions or
intentionally---are a central challenge in wireless computing. It is
well-known that already simple jamming attacks---without any special
hardware---constitute a threat for the widely used IEEE~802.11 MAC
protocol. Due to the problem's relevance, there has been a
significant effort to cope with such disruption problems both from
the industry and the academia side and much progress has been made
over the last years on how to deal with different jammer types.

While simple oblivious jammers are well-understood today and many
countermeasures exist, this article goes an important step further
and studies MAC protocols against ``smart'' jammers. In particular,
we argue that adversaries may behave in an adaptive and reactive
manner: \emph{adaptive} in the sense that their decisions on whether
to jam at a certain moment in time can depend on the protocol
history; and \emph{reactive} in the sense that the adversary can
perform physical carrier sensing (which is part, e.g., of the 802.11
standard) to learn whether the channel is currently idle or not, and
jam the medium depending on these measurements.

This article presents the first medium access (MAC) protocol called
\textsc{AntiJam} that makes effective use of the few and arbitrarily
distributed non-jammed time periods, and achieves a provable
throughput despite the presence of such a strong reactive jammer. As
we will see, the throughput is asymptotically optimal, i.e., a
constant fraction of the non-jammed time period is used for
successful transmissions. Besides this interesting theoretic result,
our protocol is simple to implement and performs well also in the
average case. Also, worth to note is that our approach is at the MAC
level and may be used in conjunction with some of the anti-jamming
techniques developed at the physical layer (e.g., frequency hopping,
spread spectrum).

\subsection{Related Work}

Researchers have studied the problem of unintentional and malicious
interference in wireless networks for several years now,
e.g.,~\cite{Chiang:mobicom07,citer1, Law:sasn05, Li:infocom07,
LiuNST07, NavdaBGR07, NegiP03, ThuenteA06,Xu:mobihoc05}. Classic
defense mechanisms operate on the physical layer
\cite{LiuNST07,NavdaBGR07} and there exist approaches both to {\em
avoid} as well as to {\em detect} jamming. Spread spectrum and
frequency hopping technologies have been shown to be very effective
to avoid jamming with widely spread signals. However, IEEE 802.11
variants spread signals with smaller factors~\cite{Brown:mobihoc06}
(IEEE 802.11b uses a narrow spreading factor of 11~\cite{IEEE99}).
As jamming strategies can come in many different flavors, detecting
jamming activities by simple methods based on signal strength,
carrier sensing, or packet delivery ratios has turned out to be
quite difficult \cite{Li:infocom07}.

Recent work has also studied \emph{MAC layer strategies} against
jamming, including coding strategies (e.g.,
\cite{Chiang:mobicom07}), channel surfing and spatial retreat (e.g.,
\cite{Alnifie:Q2SWinet07,Xu:wws04}), or mechanisms to hide messages
from a jammer, evade its search, and reduce the impact of corrupted
messages (e.g., \cite{Wood:secon07}). Unfortunately, these methods
do not help against an adaptive jammer with {\em full} information
about the history of the protocol, like the one considered in our
work.

In the theory community, work on MAC protocols has mostly focused on
efficiency. Many of these protocols are \emph{random backoff
protocols} (e.g.,
\cite{Bender05,Chlebus06,CR06,Hastad96,Raghavan99}) that do not take
jamming activity into account and, in fact, are not robust against
it (see \cite{singlehop08} for more details). Also some theoretical
work on \emph{jamming} is known (e.g.,~\cite{citer2} for a short
overview). There are two basic approaches in the literature. The
first assumes randomly corrupted messages (e.g.~\cite{PP05}), which
is much easier to handle than adaptive adversarial jamming
\cite{Raj08}. The second line of work either bounds the number of
messages that the adversary can transmit or disrupt with a limited
energy budget (e.g.~\cite{GGN06,KBKV06}), or bounds the number of
channels the adversary can jam (e.g.~\cite{seth09,dcoss09}).

The protocols in, e.g.,~\cite{KBKV06} can tackle adversarial jamming
at both the MAC and network layers, where the adversary may not only
be jamming the channel but also introducing malicious (fake)
messages (possibly with address spoofing). However, they depend on
the fact that the adversarial jamming budget is finite, so it is not
clear whether the protocols would work under heavy continuous
jamming. (The result in \cite{GGN06} seems to imply that a jamming
rate of 0.5 is the limit whereas the handshaking mechanisms in
\cite{KBKV06} seem to require an even lower jamming rate.)

Our work is motivated by the results in \cite{Raj08} and
\cite{singlehop08}. In \cite{Raj08} it is shown that an adaptive
jammer can dramatically reduce the throughput of the standard MAC
protocol used in IEEE 802.11 with only limited energy cost on the
adversary side. Awerbuch et al. \cite{singlehop08} initiated the
study of throughput-competitive MAC protocols under continuously
running, adaptive jammers, and presented a protocol that achieves a
high performance under adaptive jamming.

In this article, we extend the model and result from
\cite{singlehop08} in a crucial way: we allow the jammer to be
\emph{reactive}, i.e., to listen to the current channel state in
order to make smarter jamming decisions. We believe that the
reactive jammer model is much more realistic and hence that our
study is of practical importance. For example, by sensing the
channel, the adversary may avoid wasting energy by not jamming idle
rounds. Note however that depending on the protocol, it may still
make sense for the adversary to jam idle rounds, e.g., to influence
the protocol execution. Indeed, due to the large number of possible
strategies a jammer can pursue, the problem becomes significantly
more challenging than the non-reactive version: Not only is the
analysis more involved, but also key modifications to the protocol
in~\cite{singlehop08} were needed. While we still build upon the
algorithmic ideas presented in~\cite{singlehop08}, in order to avoid
asymmetries, our \textsc{AntiJam} protocol seeks to synchronize the
nodes' sending probabilities. (This has the desirable side effect of
an improved fairness.) While our formal analysis confirms our
expectations that the overall throughput under reactive jammers is
lower than the throughput obtainable against non-reactive jammers,
we are still able to prove a good, constant-competitive performance,
which is also confirmed by our simulation study. As a final remark,
although this article focuses on single-hop environments, our first
insights indicate that \textsc{AntiJam}-like strategies can also be
used in multi-hop settings (see also the recent extension
of~\cite{singlehop08} to unit disk graphs~\cite{multihop10}).

\subsection{Model}\label{sec:model}

We study a wireless network that consists of $n$ honest and reliable
simple wireless devices (e.g., sensor nodes) that are within the
transmission range of each other and which communicate over a single
frequency (or a limited, narrow frequency band). We assume a
back-logged scenario where the nodes continuously contend for
sending a packet on the wireless channel. A node may either transmit
a message or sense the channel at a time step, but it cannot do
both, and there is no immediate feedback mechanism telling a node
whether its transmission was successful. A node sensing the channel
may either (i) sense an \emph{idle channel} (in case no other node
is transmitting at that time), (ii) sense a \emph{busy channel} (in
case two or more nodes transmit at the time step), or (iii)
\emph{receive} a packet (in case exactly one node transmits at the
time step).

In addition to these nodes there is an adversary. We allow the
adversary to know the protocol and its entire history and to use
this knowledge in order to jam the wireless channel at will at any
time (i.e, the adversary is \emph{adaptive}). Whenever it jams the
channel, all nodes will notice a busy channel. However, the nodes
cannot distinguish between the adversarial jamming or a collision of
two or more messages that are sent at the same time. We assume that
the adversary is only allowed to jam a $(1-\varepsilon)$-fraction of
the time steps, for an arbitrary constant $0< \varepsilon \leq 1$.

Moreover, we allow the jammer to be \emph{reactive}:  it is allowed
to make a jamming decision \emph{after} it knows the actions of the
nodes at the current step. In other words, reactive jammers can
determine (through physical carrier sensing) whether the channel is
currently idle or busy (either because of a successful transmission,
a collision of transmissions, or too much background noise) and can
instantly make a jamming decision based on that information. Those
jammers arise in scenarios where, for example, encryption is being
used for communication and where the jammer cannot distinguish
between an encrypted package and noise in the channel.

In addition, we allow the adversary to perform \emph{bursty}
jamming. More formally, an adversary is called
$(T,1-\varepsilon)$-{\em bounded} for some $T\in\mathbb{N}$ and $0 <
\varepsilon < 1$ if for any time window of size $w \geq T$ the
adversary can jam at most $(1-\varepsilon) w$ of the time steps in
that window.

The network scenario described above arises, for example, in sensor
networks, which consist of simple wireless nodes usually running on
a single frequency and which cannot benefit from more advanced
anti-jamming techniques such as frequency hopping or spread
spectrum. In such scenarios, a jammer will also most probably run on
power-constrained devices (e.g., solar-powered batteries), and hence
will not have enough power to continuously jam over time (note that
the time window threshold $T$ can be chosen large enough to
accommodate the respective jamming pattern).

This article studies \emph{competitive} MAC protocols. A MAC
protocol is called $c$-competitive against some
$(T,1-\varepsilon)$-bounded adversary (with high probability or on
expectation) if, for any sufficiently large number of time steps,
the nodes manage to perform successful message transmissions in at
least a $c$-fraction of the time steps not jammed by the adversary
(with high probability or on expectation).

Our goal is to design a {\em symmetric local-control} MAC protocol
(i.e., there is no central authority controlling the nodes, and the
nodes have symmetric roles at any point in time) that is
$O(1)$-competitive against any $(T,1-\varepsilon)$-bounded
adversary. The nodes do not know $\varepsilon$, but we do allow them
to have a very rough upper bound of the number $n$ and $T$. More
specifically, we will assume that the nodes have a common parameter
$\gamma=O(1/(\log T + \log\log n))$. This is still scalable, since
such an estimate leaves room for a super-polynomial change in $n$
and a polynomial change in $T$ over time, so it does not make the
problem trivial (as would be the case if the nodes knew constant
factor approximations of $n$ or $T$).

\subsection{Our Contributions}

This article introduces and analyzes the medium access protocol
\textsc{AntiJam}. \textsc{AntiJam} is robust to a strong adaptive
and reactive jammer who can block a constant fraction of the time
and thus models a large range of (intentional and unintentional)
interference models. Nevertheless, we can show that the
\textsc{AntiJam} MAC protocol achieves a high throughput performance
by exploiting any non-blocked time intervals effectively. The main
theoretical contribution is the derivation of the following theorem
that shows that \textsc{AntiJam} is asymptotically optimal in the
sense that a constant fraction of the non-jammed execution time is
used for successful transmissions:

\begin{theorem}~\label{th:main}
The MAC protocol is $\Theta(1)$-competitive w.h.p.\footnote{With
high probability, i.e., with probability at least $1-1/n^c$, where
$c$ is a constant. As $n$ grows to infinity, the probability tends
to 1.} under any $(T,1-\varepsilon)$-bounded adversary for some
constant $\varepsilon$ if the protocol is executed for at least
$\Theta(\frac{1}{\varepsilon} \log N \max\{T, \frac{1}{\varepsilon
\gamma^2} \log^3 N\})$ many time steps.
\end{theorem}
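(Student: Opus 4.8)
The plan is to show that \textsc{AntiJam} maintains an invariant keeping the sum of the nodes' sending probabilities in a ``good'' range — bounded away from both $0$ and a suitable constant — during a constant fraction of the non-jammed steps, and that whenever the probability sum lies in that range the channel carries a successful (single-sender) transmission with constant probability. Competitiveness then follows by summing over the execution. Concretely, I would let $p_v(t)$ denote node $v$'s sending probability at step $t$, set $p(t)=\sum_v p_v(t)$, and identify the ``productive regime'' as $p(t)=\Theta(1)$ with $p(t)$ below some threshold (e.g.\ $p(t)\le \hat p$ for a small constant $\hat p$): standard Poisson-approximation estimates give that in such a step the probability of exactly one sender is $\Omega(1)$, hence a non-jammed productive step yields a successful transmission with constant probability. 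The core of the argument is therefore a potential/invariant analysis bounding, within any sufficiently long window, the number of non-jammed steps in which $p(t)$ has been driven out of the productive regime.

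The main work is controlling how the adversary, using both its knowledge of the history \emph{and} its reactive carrier-sensing ability, can push $p(t)$ away from $\Theta(1)$. I would track two bad events: $p(t)$ growing too large (too many collisions, caused by the adversary jamming to suppress the feedback that would otherwise make nodes multiplicatively decrease) and $p(t)$ shrinking too small (nodes backing off after repeated busy channels). For the ``too large'' direction I would argue that to keep $p(t)$ above the threshold the adversary must jam (or force collisions in) a near-$1$ fraction of a window, which over a window of length $\ge T$ contradicts the $(T,1-\varepsilon)$-bound; the $\varepsilon$-fraction of guaranteed non-jammed steps then forces enough idle/successful observations to pull $p(t)$ back down. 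For the ``too small'' direction I would use the protocol's synchronization of sending probabilities together with the threshold/estimation parameter $\gamma$: because the nodes' probabilities move together, the number of consecutive decreases before the sum re-enters the good range is $O\!\big(\tfrac{1}{\gamma}\log N\big)$, and a Chernoff bound over $\Theta\!\big(\tfrac{1}{\varepsilon\gamma^2}\log^3 N\big)$ steps shows that with high probability the system spends a constant fraction of \emph{every} long-enough window in the productive regime. The length bound in the theorem, $\Theta\!\big(\tfrac{1}{\varepsilon}\log N \max\{T,\tfrac{1}{\varepsilon\gamma^2}\log^3 N\}\big)$, is exactly what is needed so that (i) each window is long enough for the $(T,1-\varepsilon)$-bound and the concentration bounds to bite, and (ii) the transient ``warm-up'' phase in which $p(t)$ first reaches the good range is negligible compared to the whole run.

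The reactive aspect is what I expect to be the main obstacle, and it is where the departure from \cite{singlehop08} matters: a reactive jammer can condition on whether the current step is idle, a successful transmission, or a collision, so it can, for instance, jam precisely the successful steps while letting idle and collision steps through — which is why the throughput provably degrades relative to the non-reactive case and why the analysis must charge jammed steps carefully rather than assuming the adversary commits to a jamming pattern in advance. I would handle this by a coupling/worst-case argument over the jammer's at-most-$(1-\varepsilon)w$ jamming budget in each window: even the adversarially optimal reactive allocation of that budget cannot prevent a constant fraction of the non-jammed steps from being productive, because the protocol's symmetric multiplicative updates react to the \emph{aggregate} channel history in a way the jammer cannot indefinitely counteract within its budget. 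A secondary technical point is that the high-probability guarantee must hold over the full $\mathrm{poly}$-length run, so the per-window failure probabilities must be taken small enough (polynomially small in $N$) that a union bound over all windows still yields the claimed $1-1/n^c$ bound; choosing the window length as a large enough multiple of $\tfrac{1}{\varepsilon\gamma^2}\log^3 N$ makes the Chernoff exponents large enough for this.
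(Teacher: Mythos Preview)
Your high-level plan---show $p(t)$ stays in a productive range for a constant fraction of non-jammed steps, then use a Poisson-type estimate---matches the paper's skeleton, but several load-bearing pieces are either missing or misidentified.

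First, the productive regime is not $p(t)=\Theta(1)$ with an absolute cap like $\hat p$; the paper can only keep $p_t$ below $\delta/\varepsilon^2$ for a sufficiently large constant $\delta$, and this is why the competitive ratio comes out as $e^{-\Theta(1/\varepsilon^2)}$ rather than a constant independent of $\varepsilon$. Your sketch implicitly promises a stronger invariant than the protocol can deliver. Second, you never engage with the time-window estimate $T_v$ and the rule that decreases $p_v$ whenever $c_v>T_v$ with no recent idle step. This is a third source of probability decrease (besides successes), and controlling it is essential: the paper's Claim bounding the number of such decreases by $k_0/2+\sqrt{f}$ (where $k_0$ counts idle steps) is used in \emph{every} lemma, both to keep $p_t$ from collapsing and to show $T_v$ itself stays bounded by $\sqrt{F}/2$. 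Without this, your ``too small'' argument has no mechanism to bound the drift, and your claimed $O(\frac{1}{\gamma}\log N)$ recovery bound does not follow---the actual lower floor the paper maintains is only $1/(f^2(1+\gamma)^{\sqrt{2f}})$, obtained via a balance inequality $k_0 \le k_1 + k_2$ rather than a simple count of consecutive decreases.

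Third, and most seriously, your ``too large'' argument---that keeping $p(t)$ high forces the adversary to jam almost every step, violating its budget---is not how the proof works and would not go through against a reactive jammer. The reactive adversary can let through \emph{all} idle steps (which push $p$ up) and jam only steps with at least one sender, so it does not need to jam a near-$1$ fraction to sustain high $p$. The paper instead decomposes the high-$p$ time into \emph{passes} indexed by probability level, proves a deterministic balance $k_0 \ge k_1' - \Delta$ between idle steps and successes-by-a-new-sender within any collection of passes, and then shows probabilistically that for ``helpful'' levels the expected number of successes exceeds the number of idle steps by a factor $\Theta(\delta/\varepsilon^2)$, contradicting that balance. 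This pass decomposition and the idle/success accounting are the genuinely new ideas for the reactive case; a budget-exhaustion argument of the kind you describe is what works in the non-reactive setting of \cite{singlehop08} but not here.
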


We believe that \textsc{AntiJam} is interesting also from a
practical point of view, as the basic protocol is very simple. We
also report on our simulation results. It turns out that
\textsc{AntiJam} is able to benefit from the rare and
hard-to-predict time intervals where the shared medium is available.
Moreover, \textsc{AntiJam} converges fast and allocates the shared
medium \emph{fairly} to the nodes.

\subsection{Article Organization}

The remainder of this article is organized as follows.
Section~\ref{sec:algo} introduces the \textsc{AntiJam} MAC protocol.
Subsequently, we present a formal analysis of the throughput
performance under reactive jamming (Section~\ref{sec:analysis}).
Section~\ref{sec:experiments} reports on the insights gained from
our simulation experiments. The article is concluded in
Section~\ref{sec:conclusion}.

\section{The \textsc{AntiJam} MAC Protocol}\label{sec:algo}

The basic ideas of the \textsc{AntiJam} MAC protocol are inspired by
the protocols described in~\cite{idlesense} (which also uses access
probabilities depending on the ratio between idling and successful
time slots) and particularly~\cite{singlehop08}. However, the
algorithm in~\cite{singlehop08} does not achieve a good performance
under reactive jammers, which is due to the asymmetric access
probabilities. Therefore, in our protocol, we explicitly try to
equalize access probabilities, which also improves fairness among
the nodes.

Each node $v$ maintains a time window threshold estimate $T_v$ and a
counter $c_v$. The parameter $\gamma$ is the same for every node and
is set to some sufficiently small value in $O(1/(\log T$ $+ \log
\log n))$. Thus, we assume that the nodes have some polynomial
estimate of $T$ and even rougher estimate of $n$. Let $\hat{p}$ be
any constant so that $0<\hat{p} \le 1/24$. Initially, every node $v$
sets $T_v:=1$, $c_v:=1$ and $p_v:=\hat{p}$. Afterwards, the protocol
works in synchronized time steps. We assume synchronized time steps
for the analysis, but a non-synchronized execution of the protocol
would also work as long as all nodes operate at roughly the same
speed.

The basic protocol idea is simple. Suppose that each node $v$
decides to send a message at the current time step with probability
$p_v$ with $p_v \le \hat{p}$. Let $p=\sum_v p_v$, $q_0$ be the
probability that the channel is idle and $q_1$ be the probability
that exactly one node is sending a message. The following claim
appeared originally in \cite{singlehop08}. It states that if $q_0 =
\Theta(q_1)$, then the cumulative sending probability $p$ is
constant, which in turn implies that at any non-jammed time step we
have constant probability of having a successful transmission. Hence
our protocol aims at adjusting the sending probabilities $p_v$ of
the nodes such that $q_0 = \Theta(q_1)$, in spite of the reactive
adversarial jamming activity. This will be achieved by using a
multiplicative increase/decrease game for the probabilities $p_v$
and by synchronizing all the nodes, both in terms of sending
probabilities and their own estimates on the time window threshold
estimate $T_v$'s, at every successful transmission.
\begin{claim} \label{cl_relation}
$q_0 \cdot p \le q_1 \le \frac{q_0}{1-\hat{p}} \cdot p$.
\end{claim}

Now we present our  \textsc{AntiJam} protocol:
\begin{shaded}
In each step, each node $v$ does the following. $v$ decides with
probability $p_v$ to send a message along with a tuple: $(p_v,c_v,
T_v)$. If it decides not to send a message, it checks the following
two conditions:
\begin{enumerate}
\item If $v$ senses an idle channel, then $p_v:=
\min\{(1+\gamma)p_v, \hat{p}\}$ and $T_v:=T_v-1$.

\item If $v$ successfully receives a message along with the tuple
of $(p_{new},c_{new}, T_{new})$, then $p_v:=(1+\gamma)^{-1}p_{new}$,
$c_v:=c_{new}$, and $T_v:=T_{new}$.
\end{enumerate}
Afterwards, $v$ sets $c_v:=c_v+1$. If $c_v>T_v$ then it does the
following: $v$ sets $c_v:=1$, and if there was no idle step among
the past $T_v$ time steps, then $p_v:= (1+\gamma)^{-1} p_v$ and
$T_v:=T_v+2$.
\end{shaded}

\section{Analysis}\label{sec:analysis}

Now we restate Theorem~\ref{th:main} more precisely. We will prove
this more technical version of Theorem~\ref{th:main}. Let
$N=\max\{T,n\}$.
\begin{theorem} \label{th_main}
The \textsc{AntiJam} protocol is
$e^{-\Theta(1/\varepsilon^2)}$-competitive w.h.p.~under any
$(T,1-\varepsilon)$-bounded adversary if the protocol is executed
for at least $\Theta(\frac{1}{\varepsilon} \log N \max\{T,
(e^{\delta/\varepsilon^2}/\varepsilon \gamma^2) \log^3 N\})$ many
time steps, where $\delta$ is a sufficiently large constant.
\end{theorem}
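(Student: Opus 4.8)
The goal is to show that over a sufficiently long execution window, the number of successful transmissions is at least an $e^{-\Theta(1/\varepsilon^2)}$-fraction of the non-jammed steps. By Claim~\ref{cl_relation}, a non-jammed step yields a successful transmission with probability $q_1 = \Theta(q_0 p)$, and if $q_0 = \Theta(q_1)$ then $p = \Theta(1)$, which in turn forces $q_1 = \Theta(1)$. So the heart of the argument is to prove that, in a suitable amortized sense, the cumulative sending probability $p = \sum_v p_v$ stays in a constant range (bounded below by some constant and above by $O(1)$, the latter being essentially automatic since each $p_v \le \hat p$ and... well, $p$ can be large if $n$ is large, so the real content is a \emph{lower} bound on $q_0$ together with an \emph{upper} bound on $p$ that holds whenever $q_0$ is small).

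First I would set up the potential/amortization framework, following the structure of~\cite{singlehop08} but adapted to the synchronized protocol. The key state variables are the $p_v$'s, the counters $c_v$, and the window estimates $T_v$. I would first establish the \emph{synchronization invariants}: after any successful transmission all receiving nodes copy $(p_{new},c_{new},T_{new})$ (offset by the $(1+\gamma)^{-1}$ factor on $p$), so the nodes' values stay tightly clustered; I would prove that within $O(\frac{1}{\varepsilon}\log N)$ steps the $T_v$'s and $p_v$'s are essentially equalized across all nodes with high probability, and that the common $T_v$ is sandwiched between $\Omega(T)$ and $O(\text{poly} \log N / (\varepsilon\gamma^2))$ times an $e^{\Theta(1/\varepsilon^2)}$ factor — this is what the second term in the time bound pays for. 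The multiplicative $+2$ increase and $(1+\gamma)^{-1}$ decrease on $T_v$, combined with the "no idle step in the last $T_v$ steps" trigger, should keep $T_v$ in this band.

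Next comes the core throughput argument. I would partition the execution into phases and argue by a careful case analysis on the adversary's behavior within each window of length $\sim T_v$. If $p$ is large (say $p \ge$ some constant), then idle steps are rare ($q_0$ small), so by Claim~\ref{cl_relation} either there are many successful transmissions already (and we are done for that window), or the adversary must be jamming almost all the "useful" steps — but it can only jam a $(1-\varepsilon)$-fraction, so on the remaining $\varepsilon$-fraction of steps, either a success occurs or the channel was idle/busy; counting idle steps triggers the condition that decreases $p_v$ and triggers the $c_v > T_v$ reset. Conversely, if $p$ is small, idle steps are frequent on non-jammed slots, each of which multiplies every non-sending node's $p_v$ by $(1+\gamma)$, pushing $p$ back up geometrically; the adversary cannot suppress this because jamming an idle-looking slot still costs it from its $(1-\varepsilon)$ budget. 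The quantitative claim I would extract is: over any window of $\Theta(\frac{1}{\varepsilon\gamma^2}\log^3 N \cdot e^{\delta/\varepsilon^2})$ steps, the fraction of non-jammed steps on which $q_1 = \Omega(1)$ (equivalently, on which $p=\Theta(1)$) is at least $1 - e^{-\Theta(1/\varepsilon^2)}$; a Chernoff bound over the $\Omega(\log N)$ independent sub-windows then converts "expected constant fraction of successes on good steps" into "w.h.p. an $e^{-\Theta(1/\varepsilon^2)}$-fraction of \emph{all} non-jammed steps are successes."

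\textbf{The main obstacle.} The hard part is controlling the interaction between the reactive adversary and the feedback signals the protocol relies on. Unlike in~\cite{singlehop08}, a reactive jammer sees whether a step is idle before deciding to jam, so it can selectively jam exactly the idle steps to prevent the $(1+\gamma)$-increase of the $p_v$'s, or selectively jam successful transmissions to block the synchronization step — and it can do either only up to its $(1-\varepsilon)$ budget. The delicate accounting is to show that \emph{any} such strategy wastes enough of the adversary's budget that it cannot simultaneously (a) keep $p$ out of the good range and (b) suppress successes when $p$ is in the good range: roughly, suppressing idle-driven increases costs one jammed step per idle step, and since the protocol only needs $O(\frac{1}{\varepsilon\gamma})$ idle-driven increases per node to recover a constant $p$ (and $T_v$ is large enough that such a window exists), a constant fraction of $\varepsilon$-mass survives. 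Making this rigorous requires a drift/martingale argument on $\log p$ (or on $\sum_v \log(p_v/\hat p)$) that is robust to adversarial conditioning on the current step's channel state — this is where the $e^{\Theta(1/\varepsilon^2)}$ loss and the constant $\delta$ in the time bound come from, since small $\varepsilon$ makes the "surviving budget" window exponentially long and the drift correspondingly weak.
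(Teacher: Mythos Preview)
Your plan has a genuine gap: you misidentify both the target invariant on $p_t$ and the origin of the $e^{-\Theta(1/\varepsilon^2)}$ factor. You aim to show that $p_t = \Theta(1)$ (equivalently $q_1 = \Omega(1)$) on a $1 - e^{-\Theta(1/\varepsilon^2)}$ fraction of the non-jammed steps, and then attribute the exponential loss to a weak drift/long window. But that target is too strong and in fact false against a reactive adversary: by jamming successful transmissions (which are the events that \emph{decrease} $p$) and letting idle steps through, the adversary can drive and hold $p_t$ up to $\Theta(1/\varepsilon^2)$, not merely $\Theta(1)$. The paper does \emph{not} show $p_t = \Theta(1)$; it shows only that $p_t \le \delta/\varepsilon^2$ on at least half of the non-jammed steps (Lemma~\ref{lem:high}), via a delicate ``passes'' decomposition that charges idle steps against successful ones within nested intervals of fixed $p$-level. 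The $e^{-\Theta(1/\varepsilon^2)}$ in the competitive ratio then comes directly from evaluating $q_1 \ge \frac{p}{1-\hat p}\, e^{-p/(1-\hat p)}$ at the worst allowed value $p = \delta/\varepsilon^2$ (Lemma~\ref{lem_enough_useful}, Claim~\ref{cl:useful2}); it is not a drift-weakness artifact.

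A second, smaller gap is the lower-bound side. You gesture at ``$p$ stays in a constant range,'' but the paper's lower bound on $p_t$ is far weaker --- only $p_t \ge 1/(f^2(1+\gamma)^{\sqrt{2f}})$ (Lemma~\ref{lem_prob}) --- and that is all one can afford, since the adversary may jam all idle steps in long stretches. What makes this weak bound sufficient is the counting argument $k_0 - \log_{1+\gamma}(1/p_{t_0}) \le k_1 + k_2$ together with Claim~\ref{cl_Tinc} bounding the $T_v$-driven decreases by $k_0/2 + \sqrt{f}$; this balance, not a martingale on $\log p$, is what does the work. Your proposed ``drift/martingale on $\log p$ robust to adversarial conditioning'' would have to reproduce exactly this inequality structure, and as stated it does not account for the $T_v$-mechanism (the $k_2$ term) or for the fact that the adversary sees the channel state \emph{before} deciding to jam, which breaks the independence you would need for a standard martingale bound.
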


In our analysis, we will make use of the following well-known
relations.
\begin{lemma} \label{lem:euler}
For all $0<x<1$ it holds that
\[
  e^{-x/(1-x)} \le 1-x \le e^{-x}
\]
\end{lemma}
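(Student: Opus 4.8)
The plan is to obtain both inequalities from a single elementary fact: $1+t\le e^{t}$ for every real $t$, with equality only at $t=0$. I would first record this master inequality, either by observing that $\exp$ is convex and $1+t$ is its tangent line at $t=0$, or by checking that $f(t)=e^{t}-1-t$ has $f(0)=0$ while $f'(t)=e^{t}-1$ changes sign from negative to positive at $t=0$, so $t=0$ is a global minimum of $f$.

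For the right-hand inequality $1-x\le e^{-x}$, I would simply substitute $t=-x$ into the master inequality; since $x\in(0,1)$ this is immediate (indeed it holds for all real $x$).

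For the left-hand inequality $e^{-x/(1-x)}\le 1-x$, I would substitute $t=\frac{x}{1-x}$, which is a well-defined positive real number because $0<x<1$. The master inequality then gives $1+\frac{x}{1-x}\le e^{x/(1-x)}$, and the left-hand side simplifies to $\frac{1}{1-x}$. Both sides being strictly positive, taking reciprocals reverses the inequality and yields $e^{-x/(1-x)}\le 1-x$, as claimed. Chaining the two bounds gives $e^{-x/(1-x)}\le 1-x\le e^{-x}$.

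There is essentially no obstacle here; the proof is routine. The only points needing (minor) care are that $\frac{x}{1-x}$ is finite and positive on $(0,1)$, so the substitution is legitimate, and that reciprocation reverses the inequality precisely because both sides are positive.
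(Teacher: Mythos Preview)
Your proof is correct and complete; the reduction of both inequalities to the single bound $1+t\le e^{t}$ (with $t=-x$ and $t=x/(1-x)$ respectively) is clean and leaves no gaps. The paper itself does not prove this lemma at all---it is simply stated as one of the ``well-known relations'' used in the analysis---so your argument supplies a proof where the paper provides none.
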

\begin{lemma} \label{lem_chernoff}
Consider any set of binary random variables $X_1,$ $\ldots,X_n$.
Suppose that there are values $p_1,\ldots,p_n \in [0,1]$ with $\E [
\prod_{i \in S} X_i] \le \prod_{i \in S} p_i$ for every set $S
\subseteq \{1,\ldots,n\}$. Then it holds for $X=\sum_{i=1}^n X_i$
and $\mu = \sum_{i=1}^n p_i$ and any $\delta>0$ that
\[
  \Pr[X \ge (1+\delta)\mu] \le \left( \frac{e^{\delta}}{(1+\delta)^{1+\delta}}
  \right)^{\mu} \le e^{-\frac{\delta^2 \mu}{2(1+\delta/3)}}.
\]
If, on the other hand, it holds that $\E [ \prod_{i \in S} X_i] \ge
\prod_{i \in S} p_i$ for every set $S \subseteq \{1,\ldots,n\}$,
then it holds for any $0 < \delta < 1$ that
\[
  \Pr[X \le (1-\delta)\mu] \le \left( \frac{e^{-\delta}}{(1-\delta)^{1-\delta}}
  \right)^{\mu} \le e^{-\delta^2 \mu / 2}.
\]
\end{lemma}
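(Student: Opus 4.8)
The plan is to prove both tail bounds by the exponential-moment (Chernoff) method, the only twist being that we cannot factor the moment generating function across the $X_i$, since they are dependent. Instead I would exploit that each $X_i$ is binary, so $X_i^2=X_i$ and hence $e^{tX_i}=1+(e^t-1)X_i$ holds \emph{exactly}; multiplying over $i$ turns the moment generating function into a sum of subset products, to which the hypotheses apply directly. I would establish the upper tail first and obtain the lower tail by a dual argument.

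For the upper tail, fix $t>0$ and apply Markov's inequality to $e^{tX}$:
\[ \Pr[X\ge(1+\delta)\mu]=\Pr[e^{tX}\ge e^{t(1+\delta)\mu}]\le e^{-t(1+\delta)\mu}\,\E[e^{tX}]. \]
Since $e^{tX}=\prod_i e^{tX_i}=\prod_i\bigl(1+(e^t-1)X_i\bigr)$, expanding the product and using linearity of expectation gives $\E[e^{tX}]=\sum_{S}(e^t-1)^{|S|}\,\E[\prod_{i\in S}X_i]$. The crucial point is that for $t>0$ every coefficient $(e^t-1)^{|S|}$ is nonnegative, so the hypothesis $\E[\prod_{i\in S}X_i]\le\prod_{i\in S}p_i$ may be applied term by term, yielding
\[ \E[e^{tX}]\le\sum_{S}(e^t-1)^{|S|}\prod_{i\in S}p_i=\prod_i\bigl(1+(e^t-1)p_i\bigr)\le e^{(e^t-1)\mu}, \]
where the last step uses $1+x\le e^x$ together with $\sum_i p_i=\mu$. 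Substituting this into the Markov bound and choosing the optimal $t=\ln(1+\delta)$ collapses the expression to $\bigl(e^\delta/(1+\delta)^{1+\delta}\bigr)^\mu$.

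For the lower tail I would run the same computation with $e^{-tX}$, $t>0$, and the optimal choice $t=-\ln(1-\delta)$. Here the step that was free above becomes the obstacle: now $e^{-tX}=\prod_i\bigl(1-(1-e^{-t})X_i\bigr)$, and the subset expansion has coefficients $\bigl(-(1-e^{-t})\bigr)^{|S|}$ that \emph{alternate in sign}, so the reversed hypothesis $\E[\prod_{i\in S}X_i]\ge\prod_{i\in S}p_i$ cannot be applied term by term. I expect this sign handling to be the main difficulty. The natural route is to pass to the complementary binary variables $Y_i=1-X_i$ with parameters $q_i=1-p_i$: a lower deviation of $X$ is an upper deviation of $Y=\sum_i Y_i$, so one would like to invoke the already-proved positive-coefficient argument. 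The delicate point, and the crux of the proof, is precisely that the reversed product hypothesis on the $X_i$ must be reconciled with the form required by that argument — equivalently, that the alternating-sign terms in the expansion combine in the correct direction; this is handled either by the complementary-variable reduction or by a careful grouping of the sign-alternating subset sum. Carrying the optimization through then gives $\bigl(e^{-\delta}/(1-\delta)^{1-\delta}\bigr)^\mu$.

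Finally, the two simplified right-hand sides, $e^{-\delta^2\mu/(2(1+\delta/3))}$ and $e^{-\delta^2\mu/2}$, follow from standard scalar estimates on the rate functions $(1+\delta)\ln(1+\delta)-\delta$ and $\delta+(1-\delta)\ln(1-\delta)$ respectively (the usual $\ln(1+x)$ inequalities), and are routine to verify once the probabilistic bounds above are in hand. The whole argument is driven by the binary identity $X_i^2=X_i$, and the only genuinely delicate point is the sign alternation in the lower tail.
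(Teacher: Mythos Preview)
The paper does not prove this lemma; it is merely cited as a ``well-known relation'' and used as a black box. So there is no paper proof to compare against, and your proposal must stand on its own.

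Your upper-tail argument is correct and is the standard one: the binary identity $e^{tX_i}=1+(e^t-1)X_i$ together with nonnegativity of $(e^t-1)^{|S|}$ lets you push the subset-product hypothesis through term by term, and the rest is the usual optimization.

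The lower tail, however, has a genuine gap that you yourself flag but do not close --- and in fact it cannot be closed, because the lower-tail statement as written in the paper is \emph{false}. Take $X_1=\cdots=X_n=B$ for a single Bernoulli$(p)$ variable $B$, with $p_i=p$. Then $\E[\prod_{i\in S}X_i]=p\ge p^{|S|}=\prod_{i\in S}p_i$ for every nonempty $S$, so the hypothesis holds. But $X=nB\in\{0,n\}$ with $\Pr[X=0]=1-p$, while the claimed bound $\bigl(e^{-\delta}/(1-\delta)^{1-\delta}\bigr)^{np}$ tends to $0$ as $n\to\infty$. Your complementary-variable idea fails precisely here: expanding $\prod_{i\in S}(1-X_i)$ by inclusion--exclusion gives alternating signs, so the reversed hypothesis on the $X_i$ does \emph{not} imply $\E[\prod_{i\in S}Y_i]\le\prod_{i\in S}q_i$, and no ``careful grouping'' can repair this.

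The correct hypothesis for the lower tail (as in Panconesi--Srinivasan or Dubhashi--Ranjan) is $\E\bigl[\prod_{i\in S}(1-X_i)\bigr]\le\prod_{i\in S}(1-p_i)$; under that assumption your complementary-variable reduction to the already-proved upper tail goes through immediately, with all coefficients nonnegative. So your instinct about the method is right, but the lemma's stated hypothesis is the wrong one, and you should say so rather than gesture at a resolution that does not exist.
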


Let $V$ be the set of all nodes. Let $p_t(v)$ be node $v$'s access
probability $p_v$ at the beginning of the $t$-th time step.
Furthermore, let $p_t = \sum_{v \in V} p_t(v)$. Let $I$ be a time
frame consisting of $\frac{\alpha}{\varepsilon} \log N$ {\em
subframes} $I'$ of size $f=\max\{T, \frac{\alpha
\beta^2}{\varepsilon \gamma^2} e^{\delta/\varepsilon^2} \log^3 N\}$,
where $\alpha$, $\beta$ and $\delta$ are sufficiently large
constants. Let $F=\frac{\alpha}{\varepsilon} \log N \cdot f$ denote
the size of $I$.

First, we will derive some simple facts on the behavior of
\textsc{AntiJam}. We then show that given a certain minimal initial
cumulative probability $p_t$ in a subframe, the cumulative
probability cannot be smaller at the end of the subframe. We proceed
to show that \textsc{AntiJam} performs well in time periods in which
$p_t$ is bounded by $\delta/\varepsilon^2$ for some constant
$\delta$. Finally, we show that for any jamming strategy,
\textsc{AntiJam} has a cumulative probability of $p_t\leq
\delta/\varepsilon^2$ for most of the time, which yields our main
theorem.

We start with some simple facts. Fact~\ref{fa:access1} shows that
the protocol synchronizes the sending probabilities of the nodes (up
to a factor of $(1+\gamma)$), and that all values $c_v$ and $T_v$
are also synchronized.
\begin{fact} \label{fa:access1}
Right after a successful transmission of the tuple $(p',c',T')$,
$(p_v,c_v,T_v)=((1+\gamma)^{-1} p',c',T')$ for all receiving nodes
$v$ and $(p_u,c_u,T_u)=(p',c',T')$ for the sending node $u$. In
particular, for any time step $t$ after a successful transmission by
node $u$, $(c_v,T_v)=(c_w,T_w)$ for all nodes $v,w \in V$.
\end{fact}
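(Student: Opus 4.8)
The plan is to prove both assertions by reading off the protocol's update rules and then running a short induction over time steps. For the first assertion, look at the step in which the successful transmission takes place. Exactly one node $u$ sends, broadcasting its current tuple $(p_u,c_u,T_u)=(p',c',T')$; since $u$ decided to send, it executes \emph{neither} of the two conditional updates, so that tuple is left untouched by them. Every other node $v$ necessarily did not send (otherwise the transmission would not be successful) and receives the tuple, so the second rule fires and sets $(p_v,c_v,T_v):=((1+\gamma)^{-1}p',c',T')$, while the first (idle-channel) rule does not apply since the channel is busy. This yields exactly the claimed values immediately after the successful transmission, i.e.\ before the common $c_v:=c_v+1$ bookkeeping and the period-end check are applied.

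For the ``in particular'' part I would show, by induction over steps $t'\ge t_0$ (with $t_0$ the first step containing a successful transmission), that all nodes share a common pair $(c,T)$ at the start of step $t'+1$. The base case $t'=t_0$ follows from the first assertion together with the observation that the trailing operations — incrementing $c_v$ and, if $c_v>T_v$, resetting $c_v:=1$ and possibly updating $p_v,T_v$ — are performed identically by all nodes: they agree on $c_v$ and $T_v$, and the predicate ``there was no idle step among the past $T_v$ steps'' is a \emph{global} quantity, because a step is idle iff no node transmits, and any node that is not transmitting in such a step senses it as idle (and on an idle step every node is non-transmitting).

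For the inductive step, assume the nodes share $(c,T)$ at the start of step $t'$ and split into the three channel states. If the step is idle, every node is silent and senses an idle channel, so all apply the first rule, which changes only $p_v$ and $T_v$ and does so uniformly. If the step is a successful transmission, the first assertion again forces every node to the sender's $(c',T')$, which equals the shared value by the induction hypothesis. If the step is busy (a collision or a jam), every transmitting node skips the two conditional updates and every non-transmitting node senses a busy channel, so neither rule fires and no $(c_v,T_v)$ changes. In all three cases the nodes still agree on $(c,T)$ afterwards, and the common trailing bookkeeping — uniform for the same reason as in the base case — preserves this agreement, closing the induction.

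The proof is essentially a bookkeeping argument; the only places needing care are (i) that the \emph{sending} node does not execute the two conditional updates, (ii) that on a busy step no node alters $c_v$ or $T_v$, and (iii) that ``no idle step in the last $T_v$ steps'' is evaluated consistently by all nodes, which holds because idleness of a step is a global event observed identically by every non-transmitting node and every node is non-transmitting on an idle step.
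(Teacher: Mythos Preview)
Your argument is correct and is exactly the kind of bookkeeping the paper has in mind; note, however, that the paper does not supply a proof of this statement at all---it is asserted as a ``Fact'' that is read off directly from the protocol description, with the subsequent Fact~\ref{fa:access3} then building on it. Your write-up simply makes explicit the case analysis (sender vs.\ receiver on the successful step; idle/successful/busy on later steps; uniformity of the trailing $c_v$ and $T_v$ updates) that the paper leaves implicit, and your three ``places needing care'' (i)--(iii) are precisely the points one must check.
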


The next fact follows from the protocol and Fact~\ref{fa:access1},
and they help one understand how the cumulative probabilities vary
over time with successful transmissions, idle time steps, etc.

\begin{fact} \label{fa:access3}
For any time step $t$ after a successful transmission or a
well-initialized state of the protocol (in which
$(p_v,c_v,T_v)=(\hat{p},1,1)$ for all nodes $v$) it holds:

\noindent {\bf\em 1.} If the channel is {\em idle} at time $t$ then
$(i)$ if $p_v=\hat{p}$ for all $v$, then $p_{t+1} = p_t$;
 $(ii)$ if $p_u=\hat{p}$ and $p_v = (1+\gamma)^{-1} \hat{p}$ for all nodes
$v\not=u$, then $p_{t+1} = (1+\gamma-O(1/n))p_t$ (because all nodes
except for $u$ increase their sending probability by a factor
$(1+\gamma)$ from $\hat{p}/(1+\gamma)$.); or
$(iii)$ if $p_v < \hat{p}$ for all nodes $v$, then $p_{t+1} =
(1+\gamma) p_t$.

\noindent {\bf\em 2.} If there is a {\em successful transmission} at
time $t$, and if $c_v\leq T_v$ or there was an idle time step in the
previous $T_v$ rounds, then $(i)$ if the sender is the same as the
last successful sender, then $p_{t+1} = p_t$ (because for the sender
$u$, $p_u(t+1)=p_u(t)$, and the other nodes remain at
$p_u(t+1)/(1+\gamma)=p_u(t)/(1+\gamma)$.);
if $(ii)$ the sender $w$
is different from the last successful sender $u$ and $p_v=\hat{p}$
for all nodes $v$ (including $u$ and $w$), then $p_{t+1} =
(1+\gamma-O(1/n))^{-1} p_t$
(all nodes except $w$ reduce their sending probability.); or
%
$(iii)$ if the sender $w$ is different from the last successful
sender $u$ and $p_v<\hat{p}$ for at least one node $v$ (including
$u$ and $w$), then $p_{t+1} = (1+\gamma)^{-1} p_t$
(because at time $t$, for all nodes $v\neq u$:
$p_v(t)=p_{u}(t)/(1+\gamma)$; subsequently, $p_{w}(t+1)=p_{w}(t)$
and for all nodes $v\neq w$: $p_v(t+1)=p_{w}(t+1)/(1+\gamma)$.)

\noindent {\bf\em 3.} If the channel is {\em busy} at time $t$, then
$p_{t+1} = p_t$ when ignoring the case that $c_v>T_v$.

Whenever $c_v>T_v$ and there has not been an idle time step during
the past $T_v$ steps, then $p_{t+1}$ is, in addition to the actions
specified in the two cases above, reduced by a factor of
$(1+\gamma)$.
\end{fact}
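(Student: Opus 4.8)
\textbf{Overall approach.} Fact~\ref{fa:access3} is a purely deductive bookkeeping statement: starting from a state that is either ``well-initialized'' or immediately follows a successful transmission, Fact~\ref{fa:access1} tells us the nodes' tuples $(p_v,c_v,T_v)$ are synchronized up to the usual $(1+\gamma)$ gap between the last successful sender $u$ and everyone else. The plan is therefore to fix such a reference time step, record the precise configuration of the $p_v$'s that Fact~\ref{fa:access1} guarantees, and then simply apply the transition rules of the \textsc{AntiJam} protocol case by case for each of the three channel outcomes (idle, successful, busy), tracking how $p_{t+1}=\sum_v p_v(t+1)$ compares to $p_t=\sum_v p_v(t)$. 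Crucially, I will argue that between the reference step and time $t$ the configuration retains the invariant ``all $p_v$ equal, or exactly one node (the last successful sender) is a factor $(1+\gamma)$ above the rest,'' because idle steps scale everyone uniformly (toward the cap $\hat p$) and busy steps leave everyone untouched; this invariant is what lets us resolve every sub-case.

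\textbf{Case 1 (idle channel).} By the invariant, at time $t$ either (a) all $p_v=\hat p$, (b) exactly one node $u$ has $p_u=\hat p$ and all others have $\hat p/(1+\gamma)$, or (c) all $p_v<\hat p$. In case (a) the increase rule $p_v:=\min\{(1+\gamma)p_v,\hat p\}$ is inert for every node, so $p_{t+1}=p_t$; this gives (i). In case (b) node $u$ stays at $\hat p$ while the other $n-1$ nodes jump from $\hat p/(1+\gamma)$ to $\hat p$, so $p_{t+1}=\hat p+(n-1)\hat p$ and $p_t=\hat p+(n-1)\hat p/(1+\gamma)$; the ratio is $(1+\gamma-O(1/n))$ after a short computation, giving (ii). In case (c) every node is strictly below the cap, so each $p_v$ is multiplied by exactly $(1+\gamma)$, hence $p_{t+1}=(1+\gamma)p_t$, which is (iii). (That the $T_v$-decrement happens simultaneously for all nodes is immediate from synchronization and does not affect $p_{t+1}$.)

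\textbf{Case 2 (successful transmission) and Case 3 (busy channel).} For a successful transmission by $w$ with the side condition ``$c_v\le T_v$ or an idle step occurred in the last $T_v$ rounds'' (so the extra $c_v>T_v$ penalty is not triggered), rule~(2) sets every receiving node's probability to $(1+\gamma)^{-1}p_w(t)$ while $w$ keeps $p_w(t)$. If $w$ is the previous successful sender, then before this step $w$ already held the ``high'' value and the others held $p_w(t)/(1+\gamma)$, so nothing changes: $p_{t+1}=p_t$, giving 2(i). If $w$ is new and all $p_v=\hat p$ at time $t$, then $w$ stays at $\hat p$ and the other $n-1$ nodes drop to $\hat p/(1+\gamma)$; the same arithmetic as Case~1(b) in reverse yields $p_{t+1}=(1+\gamma-O(1/n))^{-1}p_t$, giving 2(ii). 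If $w$ is new but some node is below $\hat p$, then by the invariant every node other than the old sender $u$ sat at $p_u(t)/(1+\gamma)$; after the step $w$ is at $p_w(t)=p_u(t)/(1+\gamma)$ (since $w\ne u$) and everyone else is at that value divided by another $(1+\gamma)$, so each of the $n$ terms shrinks by exactly $(1+\gamma)$, giving $p_{t+1}=(1+\gamma)^{-1}p_t$, which is 2(iii). For a busy step (Case~3), rule~(2) does not fire and rule~(1) does not fire, so absent the $c_v>T_v$ event no $p_v$ changes and $p_{t+1}=p_t$. Finally, the closing sentence is just the observation that the block ``if $c_v>T_v$ \dots\ then $p_v:=(1+\gamma)^{-1}p_v$'' is executed by all nodes at once (by synchronization of the $c_v$'s and $T_v$'s, Fact~\ref{fa:access1}) when no idle step occurred in the last $T_v$ rounds, multiplying $p_{t+1}$ by an additional $(1+\gamma)^{-1}$ on top of whatever Cases~1--3 prescribed.

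\textbf{Main obstacle.} There is no genuine difficulty here; the statement is a routine unrolling of the protocol's update equations. The one point requiring care is verifying that the ``one node high, the rest low'' invariant is genuinely preserved from the reference step up to time $t$ under arbitrary interleavings of idle and busy steps (successful steps would reset to a new reference step, so they need not be considered inside a window). This follows because an idle step either leaves everyone at $\hat p$ or multiplies all $p_v$ by the same factor $(1+\gamma)$ when all are below the cap — and the mixed sub-case (b), where only the high node is capped, is exactly the transient that the $O(1/n)$ terms in 1(ii) and 2(ii) account for. Once that invariant is in hand, each of the ten sub-claims is a one-line computation.
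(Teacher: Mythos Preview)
Your proposal is correct and matches the paper's treatment. The paper does not give a separate proof of this Fact; the parenthetical ``because \dots'' explanations embedded in the statement are the entire argument, and they amount to exactly the case-by-case unrolling of the protocol rules that you carry out. Your explicit articulation of the ``one node at $p$, the rest at $p/(1+\gamma)$'' invariant (inherited from Fact~\ref{fa:access1} and preserved under idle/busy steps) is a slight elaboration that the paper leaves implicit, but it is the same mechanism driving both arguments.
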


We can now prove the following crucial lemma.
\begin{lemma} \label{lem_prob}
For any subframe $I'$ in which initially $p_{t_0} \ge 1/(f^2
(1+\gamma)^{\sqrt{2f}})$, the last time step $t$ of $I'$ again
satisfies $p_t \ge 1/(f^2 (1+\gamma)^{\sqrt{2f}})$, w.h.p.
\end{lemma}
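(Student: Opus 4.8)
The plan is a proof by contradiction. Abbreviate the threshold as $\rho := 1/(f^2(1+\gamma)^{\sqrt{2f}})$ and suppose the last step $t$ of $I'$ has $p_t < \rho$. The first observation to record is that $\rho$ is \emph{super-polynomially} small in $N$: since $f \ge (\alpha\beta^2/(\varepsilon\gamma^2))\,e^{\delta/\varepsilon^2}\log^3 N$ and $\gamma = O(1/(\log T + \log\log n))$, we get $\gamma\sqrt{f} = \omega(\log N)$ and hence $(1+\gamma)^{\sqrt{2f}} = N^{\omega(1)}$, so that even $f\cdot\rho = N^{-\omega(1)}$. Because $p_{t_0} \ge \rho$ while $p_t < \rho$, there is a \emph{last} step $s$ of $I'$ with $p_s \ge \rho$; then $t_0 \le s < t$ and $p_{s'} < \rho$ for every $s' \in (s,t]$.

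The second step localizes the dynamics once $p$ is below $\rho$. Whenever $p_{s'} < \rho$, a union bound over the nodes gives that the probability that \emph{any} node transmits at step $s'$ is at most $\sum_v p_{s'}(v) = p_{s'} < \rho$; a union bound over the at most $f$ steps in $(s,t]$ then shows that, w.h.p., every step of $(s,t]$ not jammed by the adversary is idle and no successful transmission occurs in $(s,t]$. Applying the same estimate at $s$ itself — where $\rho \le p_s < \rho(1+\gamma)^2$, since a single step changes $p$ by at most a factor $(1+\gamma)^{\pm 2}$ — shows that, w.h.p., step $s$ is not a successful transmission; as $p_s \ge \rho > p_{s+1}$, Fact~\ref{fa:access3} leaves only the possibility that step $s$ is a jammed step on which a timeout fires, so $p_{s+1} = p_s/(1+\gamma) \ge \rho/(1+\gamma)$. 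Consequently, by Fact~\ref{fa:access3}, on $(s,t]$ the quantity $p$ is multiplied by $(1+\gamma)$ on each (necessarily non-jammed) idle step, by $(1+\gamma)^{-1}$ on each step where a timeout fires, and is otherwise unchanged; and every maximal run of consecutive non-idle — hence adversarially jammed — steps inside $(s,t]$ has length $< T \le f$, since otherwise the $(T,1-\varepsilon)$-boundedness of the adversary would be violated, while every window of at least $T$ consecutive steps of $(s,t]$ contains at least an $\varepsilon$-fraction of idle steps.

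What remains is a combinatorial count of timeout events in $(s,t]$ against idle steps in $(s,t]$, and this is the heart of the proof. The mechanism that makes it work is the asymmetry in Fact~\ref{fa:access3}: a timeout raises $T_v$ by $2$ whereas an idle step lowers it only by $1$, and a timeout can only fire after a full run of $T_v$ consecutive non-idle steps with no idle step in between. Hence inside a single jammed run of length $b$ the value of $T_v$ used at successive timeouts grows by $2$ each time, so that run contains only $O(\sqrt{b})$ timeout events; in particular, since jammed runs have length $< T \le f$, a single run can depress $p$ only by a factor absorbed into the $(1+\gamma)^{\sqrt{2f}}$ part of $\rho$. When there are many jammed runs, the adversary's (windowed) budget forces the interspersed idle steps to be dense enough that the runs are short on average, and one charges the timeouts of each run to the idle step preceding it together with the quadratic growth of $T_v$ across the run; telescoping this charging over the runs of $(s,t]$ shows that the number of idle steps in $(s,t]$ is at least the number of timeout steps in $(s,t]$ up to an additive term that the $f^2$ prefactor of $\rho$ (and the $(1+\gamma)$ lost at step $s$) absorb. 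Multiplying $p_{s+1} \ge \rho/(1+\gamma)$ by the resulting net power of $(1+\gamma)$ then yields $p_t \ge \rho$, contradicting the assumption and proving the lemma.

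The main obstacle is precisely this timeout/idle accounting, and three features make it delicate. First, the suppressing clause ``no idle step among the past $T_v$ time steps'' means that whether a counter wrap-around actually triggers a decrease depends on the fine-grained value of $c_v$ when a run begins (and hence on how recently the last idle step occurred), so the charging must be set up carefully at run boundaries, using the synchronization of $T_v$ and $c_v$ across nodes from Fact~\ref{fa:access1}. Second, the adversary's budget is only a \emph{windowed} constraint (windows of size $\ge T$), so it bounds run lengths and idle-step density only on average, not run by run. Third, the subframe may end in the middle of a jammed run, so the count must remain valid for a truncated final run. Reconciling all of this so that the error terms fit under the $\sqrt{2f}$ exponent and the $f^2$ prefactor, uniformly over every placement of the adversary's jamming inside $I'$, is the technical crux.
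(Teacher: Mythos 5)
There is a genuine gap, and it sits exactly where you place ``the heart of the proof.'' You anchor the contradiction at $s$, the last step with $p_s \ge \rho$ (your $\rho:=1/(f^2(1+\gamma)^{\sqrt{2f}})$), and then try to prove a purely local inequality on the suffix $(s,t]$: that the number of idle steps there is at least the number of timeout steps, ``up to an additive term that the $f^2$ prefactor of $\rho$ absorbs.'' Two things go wrong. First, in a contradiction argument whose anchor and target are the \emph{same} threshold $\rho$, there is no slack to absorb anything: from $p_{s+1}\ge \rho/(1+\gamma)$ you need $k_0 \ge k_2+1$ on the suffix \emph{exactly}; an additive error of even $\Theta(\sqrt{f})$ costs a factor $(1+\gamma)^{\Theta(\sqrt{f})}$, which dwarfs $f^2$ (indeed $\gamma\sqrt{f}=\omega(\log N)$, as you note yourself). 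Second, and more fundamentally, the local inequality is false for suffixes that can actually arise: the subframe may end inside a single adversarially jammed run of length up to $T-1$, and such a tail can contain up to $\Theta(\sqrt{T})$ timeout events and \emph{zero} idle steps (the strongest deterministic bound available is of the form of Claim~\ref{cl_Tinc}, $k_2 \le k_0/2+\sqrt{f}$, whose additive $\sqrt{f}$ term is unavoidable). All the constraints you establish on $(s,t]$ --- no transmissions w.h.p., jammed runs shorter than $T$, windowed $\varepsilon$-density of idle steps --- are perfectly consistent with $k_0=0$ and $p_t<\rho$, so no completion of your charging scheme can extract the contradiction from this local information. You acknowledge the truncated-final-run issue in your closing paragraph, but within your framework it cannot be repaired.

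What is missing is the two-threshold structure of the paper's proof. The paper first shows (Claim~\ref{cl_pup}) that if $p$ starts in $[\rho,1/f^2]$ then w.h.p.\ it reaches the much higher level $1/f^2=\rho(1+\gamma)^{\sqrt{2f}}$ somewhere in $I'$: while $p\le 1/f^2$, transmissions are so unlikely that w.h.p.\ fewer than $\varepsilon f/4$ steps carry any message, so of the at least $\varepsilon f$ non-jammed steps guaranteed by the $(T,1-\varepsilon)$-bound over the \emph{whole} subframe, almost all are idle and multiply $p$ by $(1+\gamma)$, overwhelming the at most $k_0/2+\sqrt{f}$ timeout decreases and the few successes (with a union bound over the adversary's adaptive choices). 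Then (Claim~\ref{cl_plp}) it shows that after any step with $p>1/f^2$, $p$ cannot fall below $\rho$: in a stretch with few non-jammed steps the number of decreases is at most $O(\log N)+\sqrt{f}<\sqrt{2f}$, and the gap $(1+\gamma)^{\sqrt{2f}}$ between $1/f^2$ and $\rho$ is precisely the buffer that absorbs them; stretches with many non-jammed steps are handled by the same counting as before, pruning at the last time $p>1/f^2$. Your endgame scenario (an all-jammed tail firing timeouts) is exactly what this buffer exists for; without first proving that $p$ has climbed to $1/f^2$ before such a tail, the conclusion at the bare threshold $\rho$ is out of reach.
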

\begin{proof}
We start with the following claim about the maximum number of times
the nodes decrease their probabilities in $I'$ due to $c_v>T_v$.
\begin{claim} \label{cl_Tinc}
If in subframe $I'$ the number of idle time steps is at most $k$,
then every node $v$ increases $T_v$ by 2 at most $k/2+\sqrt{f}$ many
times.
\end{claim}
\begin{proof}
Only idle time steps reduce $T_v$. If there is no idle time step
during the last $T_v$ many steps, $T_v$ is increased by 2. Suppose
that $k=0$. Then the number of times a node $v$ increases $T_v$ by 2
is upper bounded by the largest possible $\ell$ so that
$\sum_{i=0}^{\ell} T_v^0+2i \le f$, where $T_v^0$ is the initial
size of $T_v$. For any $T_v^0 \ge 1$, $\ell \le \sqrt{f}$, so the
claim is true for $k=0$. At best, each additional idle time step
allows us to reduce all thresholds for $v$ by 1, so we are searching
for the maximum $\ell$ so that $\sum_{i=0}^{\ell}
\max\{T_v^0+2i-k,1\} \le f$. This $\ell$ is upper bounded by $k/2 +
\sqrt{f}$, which proves our claim.
\end{proof}

This claim allows us to show the following claim.
\begin{claim} \label{cl_pup}
Suppose that for the first time step $t_0$ in $I'$, $p_{t_0} \in
[1/(f^2 (1+\gamma)^{\sqrt{2f}}), 1/f^2]$. Then there is a time step
$t$ in $I'$ with $p_t \ge 1/f^2$, w.h.p.
\end{claim}
\begin{proof}
Suppose that there are $g$ non-jammed time steps in $I'$. Let $k_0$
be the number of these steps with an idle channel and $k_1$ be the
number of these steps with a successful message transmission.
Furthermore, let $k_2$ be the maximum number of times a node $v$
increases $T_v$ by 2 in $I'$. If all time steps $t$ in $I'$ satisfy
$p_t < 1/f^2$, then it must hold that
\[
  k_0 - \log_{1+\gamma} (1/p_{t_0}) \le k_1+k_2.
\]
This is because no $v$ has reached a point with $p_t(v)=\hat{p}$ in
this case, so Fact~\ref{fa:access3} 
implies that for each time step $t$ with an idle channel, $p_{t+1} =
(1+\gamma) p_t$. Thus, at most $\log_{1+\gamma} (1/p_{t_0})$ time
steps with an idle channel would be needed to get $p_t$ to $1/f^2$,
and then there would have to be a balance between further increases
(that are guaranteed to be caused by an idle channel) and decreases
(that might be caused by a successful transmission or the case
$c_v>T_v$) of $p_t$ in order to avoid the case $p_t \ge 1/f^2$. The
number of times we can allow an idle channel is maximized if all
successful transmissions and cases where $c_v>T_v$ cause a reduction
of $p_t$. So we need $k_0 - \log_{1+\gamma} (1/p_{t_0}) \le k_1+k_2$
to hold to avoid the case $p_t \ge 1/f^2$ somewhere in $I'$.

We know from Claim~\ref{cl_Tinc} that $k_2 \le k_0/2 + \sqrt{f}$.
Hence,
\begin{footnotesize}
\begin{eqnarray*}
  k_0 & \le & 2 \log_{1+\gamma} f + \sqrt{f} + k_1 + k_0/2 + \sqrt{f} \\
\Rightarrow \quad k_0 & \le & 4 \log_{1+\gamma} f + 2k_1 + 4\sqrt{f}
\end{eqnarray*}
\end{footnotesize}
Suppose that $4 \log_{1+\gamma} f + 4\sqrt{f} \le
\varepsilon f/4$, which is true if $f=\Omega(1/\varepsilon^2)$ is
sufficiently large (which is true for $\varepsilon=\Omega(1/\log^3
N)$). Since $g \ge \varepsilon f$ due to our adversarial model, it
follows that we must satisfy $k_0 \le 2k_1 + g/4$.

Certainly, for any time step $t$ with $p_t \le 1/f^2$,
\begin{eqnarray*}
  \Pr[\ge 1 \mbox{ message transmitted at $t$}] & \le & 1/f^2
\end{eqnarray*}
Suppose for the moment that no time step is jammed in $I'$. Then
$\E[k_1] \le (1/f^2)f = 1/f$. In order to prove a bound on $k_1$
that holds w.h.p., we can use the general Chernoff bounds stated
above. For any step $t$, let the binary random variable $X_t$ be 1
if and only if at least one message is sent at time $t$ and $p_t \le
1/f^2$. Then
\begin{footnotesize}
\begin{eqnarray*}
  \Pr[X_t=1] & = & \Pr[p_t \le 1/f^2] \cdot \Pr[\ge 1 \mbox{ msg sent} \mid p_t \le
  1/f^2] \\
  & \le & 1/f^2
\end{eqnarray*}
\end{footnotesize} and it particularly holds that for any set $S$ of
time steps prior to some time step $t$ that
\[
  \Pr[X_t = 1 \mid \prod_{s \in S} X_s=1] \le 1/f^2
\]
Then, we have
\begin{footnotesize}
\begin{eqnarray*}
     \Pr[\prod_{s \in S} X_s=1] & = & \Pr[X_1 = 1]\cdot \Pr[X_2 = 1|X_1 = 1] \\
    &\cdot& \Pr[X_3 = 1| \prod_{s = 1,2} X_s = 1]  \\
    &{\cdot...\cdot}\\
    &\cdot& \Pr[X_{|S|} = 1| \prod_{s = 1,2,...,|S|-1} X_s = 1] \\
     & \le & (1/f^2)^{|S|}
\end{eqnarray*}
\end{footnotesize}
and
\[
 \E[\prod_{s \in S} X_s=1] =  \Pr[\prod_{s \in S} X_s = 1] \le  (1/f^2)^{|S|}
\]

Thus, the Chernoff bounds and our choice of $f$ imply that either
$\sum_{t \in I'} X_t < \varepsilon f/4$ and $p_t \le 1/f^2$
throughout $I'$ w.h.p., or there must be a time step $t$ in $I'$
with $p_t > 1/f^2$ which would finish the proof. Therefore, unless
$p_t > 1/f^2$ at some point in $I'$, $k_1<\varepsilon f/4$ and $k_0
> (1-\varepsilon/4)f$ w.h.p. As the reactive adversary can now reduce
$k_0$ by at most $f-g$ when leaving $g$ non-jammed steps, it follows
that for any adversary, $k_0 > (1-\varepsilon/4)f - (f-g) = g-
(\varepsilon/4)f$. That, however, would violate our condition above
that $k_0 \le 2k_1 + g/4$ as that can only hold given the bounds on
$g$ and $k_1$ if $k_0 \le g-(\varepsilon/4)f$.

Note that the choice of $g$ is not oblivious as the adversary may
{\em adaptively} decide to set $g$ based on the history of events.
Hence, we need to sum up the probabilities over all adversarial
strategies of selecting $g$ in order to show that none of them
succeeds, but since there are only $f$ many, and for each the
claimed property holds w.h.p., the claim follows.
\end{proof}

Similar to this claim, we can also prove the following claim.

\begin{claim} \label{cl_plp}
Suppose that for the first time step $t_0$ in $I'$, $p_{t_0} \ge
1/f^2$. Then there is no time step $t$ in $I'$ with $p_t <
\frac{1}{f^2 (1+\gamma)^{\sqrt{2f}}}$, w.h.p.
\end{claim}
\begin{proof}
Consider some fixed time step $t$ in $I'$ and let $I''=(t_0,t]$.
Suppose that there are $g$ non-jammed time steps in $I''$. If $g \le
\beta \log N$ for a (sufficiently large) constant $\beta$, then it
follows for the probability $p_t$ at the end of $I''$ due to
Claim~\ref{cl_Tinc} that
\[
  p_t \ge \frac{1}{f^2} \cdot (1+\gamma)^{-(2\beta \log N + \sqrt{f})}
    \ge \frac{1}{f^2 (1+\gamma)^{\sqrt{2f}}}
\]
given that $\varepsilon = \Omega(1/\log^3 N)$, because at most
$\beta \log N$ decreases of $p_t$ can happen due to a successful
transmission and at most $\beta \log N/2 + \sqrt{f}$ further
decreases of $p_t$ can happen due to exceeding $T_v$.

So suppose that $g > \beta \log N$. Let $k_0$ be the number of these
steps with an idle channel and $k_1$ be the number of these steps
with a successful message transmission. Furthermore, let $k_2$ be
the maximum number of times a node $v$ increases $T_v$ in $I''$. If
$p_t < \frac{1}{f^2 (1+\gamma)^{\sqrt{2f}}}$ then it must hold that
\[
  k_0 \le k_1 + k_2
\]
Since $k_2 \le k_0/2 + \sqrt{f}$, this implies that $k_0 \le 2k_1 +
2\sqrt{f} \le 2k_1 + g/4$. Thus, we are back to the case in the
proof of Claim~\ref{cl_pup}, which shows that $k_0 \le 2k_1 + g/4$
does not hold w.h.p., given that $g > \beta \log N$ and we never
have the case in $I''$ that $p_t
> 1/f^2$.

If there is a step $t'$ in $I''$ with $p_{t'} > 1/f^2$, we prune
$I''$ to the interval $(t',t]$ and repeat the case distinction
above. As there are at most $f$ time steps in $I''$, the claim
follows.
\end{proof}

Combining Claims~\ref{cl_pup} and \ref{cl_plp} completes the proof
of Lemma~\ref{lem_prob}.
\end{proof}

Lemma~\ref{lem_enough_useful} shows that for times of low cumulative
probabilities, \textsc{AntiJam} yields a good performance.
\begin{lemma}\label{lem_enough_useful}
Consider any subframe $I'$, and let $\delta > 1$ be a sufficiently
large constant. Suppose that at the beginning of $I'$, $p_{t_0} \ge
1/(f^2 (1+\gamma)^{\sqrt{2f}})$ and $T_v \le \sqrt{F}/2$ for every
node $v$. If $p_t \le \delta/\varepsilon^2$ for at least half of the
non-jammed time steps in $I'$, then \textsc{AntiJam} is at least
$\frac{\delta}{8(1-\hat{p})\varepsilon^2}
e^{-\delta/(1-\hat{p})\varepsilon^2}$-competitive in $I'$.
\end{lemma}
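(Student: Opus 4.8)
The plan is to lower-bound the number of successful transmissions in $I'$ by counting, separately, the ``useful'' non-jammed steps where the cumulative probability is in the good range $p_t \le \delta/\varepsilon^2$, and then showing that a constant fraction of those steps yields a successful transmission in expectation, before finally converting the expectation bound into a high-probability bound via the Chernoff bound of Lemma~\ref{lem_chernoff}. First I would fix notation: let $g \ge \varepsilon f$ be the number of non-jammed steps in $I'$, and let $G$ be the subset of those steps with $p_t \le \delta/\varepsilon^2$; by hypothesis $|G| \ge g/2 \ge \varepsilon f/2$. I would also invoke Lemma~\ref{lem_prob} together with the hypotheses $p_{t_0} \ge 1/(f^2(1+\gamma)^{\sqrt{2f}})$ and $T_v \le \sqrt F/2$ to guarantee that throughout $I'$ the cumulative probability never drops below $1/(f^2(1+\gamma)^{\sqrt{2f}})$, so $p_t$ stays in a controlled band on the good steps.

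Next I would estimate the probability of a successful transmission at a single good step $t$. At such a step $p_t \le \delta/\varepsilon^2$, and I want a matching lower bound $p_t \ge c$ for some constant $c$ in terms of $\varepsilon$ — intuitively, on the ``idle'' steps $p_t$ is being multiplicatively increased, so it cannot simultaneously be tiny for many steps; alternatively, one argues that if $p_t$ were very small for a good step then Claim~\ref{cl_relation} forces the channel to be idle, and a large number of idle steps would push $p_t$ back up, contradicting that it stays small. Given $c \le p_t \le \delta/\varepsilon^2$, the probability that exactly one node sends is $q_1$, and by Claim~\ref{cl_relation} we have $q_1 \ge q_0 \cdot p_t \ge (1-\hat p)^{p_t/\hat p}\cdot$ (something) $\cdot p_t$; using Lemma~\ref{lem:euler} to bound $q_0 = \prod_v (1-p_v) \ge e^{-p_t/(1-\hat p)} $ we get $q_1 \ge p_t \, e^{-p_t/(1-\hat p)} \ge \frac{c}{?}\, e^{-\delta/((1-\hat p)\varepsilon^2)}$, which is exactly the order of the competitive ratio claimed. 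So the expected number of successful transmissions over the good steps is at least $|G| \cdot p_t e^{-p_t/(1-\hat p)} \gtrsim \frac{\varepsilon f}{2}\cdot \frac{\delta}{\varepsilon^2} e^{-\delta/((1-\hat p)\varepsilon^2)}$ after plugging in $p_t$ near its maximal value $\delta/\varepsilon^2$ (the function $x e^{-x/(1-\hat p)}$ is handled by noting we only need a lower bound and checking it at the endpoints of the band). Dividing by $g \le f$ gives a competitive ratio of order $\frac{\delta}{8(1-\hat p)\varepsilon^2} e^{-\delta/((1-\hat p)\varepsilon^2)}$, matching the statement.

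To make this hold w.h.p.\ rather than in expectation, I would define indicator variables $Y_t$ for a successful transmission at each good step $t$, observe that conditioning on the past only makes the success probability larger (so the hypotheses of the second, lower-tail part of Lemma~\ref{lem_chernoff} apply with the $p_i$ set to the per-step lower bound), and conclude that with probability at least $1 - e^{-\Omega(\mu)}$ the actual count is at least half its expectation, where $\mu = \Omega(\varepsilon f \cdot \frac{\delta}{\varepsilon^2} e^{-\delta/\varepsilon^2})$. Here the choice $f = \Omega\bigl((\beta^2/\varepsilon\gamma^2)e^{\delta/\varepsilon^2}\log^3 N\bigr)$ is exactly what makes $\mu = \Omega(\log N)$, so the failure probability is polynomially small in $N \ge n$. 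As in Claim~\ref{cl_pup}, since the adversary chooses $g$ (and hence $G$) adaptively, I would take a union bound over the at most $f$ possible values, which costs only a $\log$ factor already absorbed.

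The main obstacle I expect is the lower bound $p_t \ge c$ on the good steps — i.e.\ ruling out the regime where $p_t$ is simultaneously below $1/f^2$ (so transmissions are rare) yet the step still counts toward the ``half the non-jammed steps'' quota. Lemma~\ref{lem_prob} only gives the much weaker floor $1/(f^2(1+\gamma)^{\sqrt{2f}})$, which is far too small to yield a constant competitive ratio directly. The resolution is to split the good steps into those with $p_t \ge 1/f^2$ (where the above computation runs cleanly) and those with $p_t < 1/f^2$, and to argue — via the same idle-step counting as in Claim~\ref{cl_pup}, using that idle steps multiplicatively boost $p_t$ by $(1+\gamma)$ — that the latter class is a negligible fraction of the non-jammed steps unless a successful transmission already occurred, so they cannot make up a full half. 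Getting this bookkeeping tight, and tracking the $e^{-\delta/\varepsilon^2}$ factors consistently through Claim~\ref{cl_relation} and Lemma~\ref{lem:euler}, is where the real work lies.
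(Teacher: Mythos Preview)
Your approach has a genuine gap at exactly the point you flagged as the ``main obstacle,'' and your proposed resolution does not close it. You want to lower-bound the success probability $q_1 \ge p_t\,e^{-p_t/(1-\hat p)}$ at each good step by a constant, but the function $x\mapsto x e^{-x/(1-\hat p)}$ vanishes as $x\to 0$, so you need a \emph{constant} lower bound on $p_t$. Splitting at $p_t=1/f^2$ does not help: even on the ``clean'' side $p_t\ge 1/f^2$, your per-step bound is only $\Theta(1/f^2)$, which multiplied by $|G|\le f$ gives $O(1/f)$ successful transmissions, not a constant fraction. And there is no constant floor on $p_t$ available --- Lemma~\ref{lem_prob} is the best you get, and it is exponentially small in $\sqrt f$.

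The paper avoids this entirely by a two-stage argument you are missing. First, it counts \emph{useful} steps, defined as non-jammed steps with $p_t\le\delta/\varepsilon^2$ that are either idle \emph{or} successful. The key observation (Claim~\ref{cl:useful2}) is that a reactive adversary minimizes the number of useful steps by jamming all idle steps and leaving only busy ones; but \emph{conditional on the channel being busy}, the success probability $q_1/(1-q_0)$ is a decreasing function of $p_t$, so its minimum over $p_t\le\delta/\varepsilon^2$ is attained at the \emph{upper} endpoint $p_t=\delta/\varepsilon^2$, giving exactly the $\frac{\delta}{(1-\hat p)\varepsilon^2}e^{-\delta/((1-\hat p)\varepsilon^2)}$ factor. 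This is where the small-$p_t$ worry disappears: when $p_t$ is tiny the step is almost surely idle, hence useful automatically, and if the adversary jams it, it costs the adversary budget. Second (Claim~\ref{cl:useful1}), the paper shows that among the $k$ useful steps at least a third are successes: each idle useful step multiplies $p_t$ by $(1+\gamma)$, each success by a new sender divides it by $(1+\gamma)$, and the $c_v>T_v$ mechanism contributes at most $k_0/2+\sqrt f$ further decreases by Claim~\ref{cl_Tinc}; since $p_t$ stays below $\delta/\varepsilon^2$ on useful steps, the increases and decreases must roughly balance, forcing $k_0\le 2k_1+O(\sqrt f)$ and hence $k_1\ge k/3-2\sqrt f$. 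It is this balance argument --- not a pointwise lower bound on $p_t$ --- that converts ``many idle-or-successful steps'' into ``many successful steps.''
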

\begin{proof}
A time step $t$ in $I$ is called {\em useful} if we either have an
idle channel or a successful transmission at time $t$ (i.e., the
time step is not jammed and there are no collisions) and $p_t \le
\delta/\varepsilon^2$. Let $k$ be the number of useful time steps in
$I'$. Furthermore, let $k_0$ be the number of useful time steps in
$I'$ with an idle channel, $k_1$ be the number of useful time steps
in $I'$ with a successful transmission and $k_2$ be the maximum
number of times a node $v$ reduces $p_v$ in $I'$ because of
$c_v>T_v$. Recall that $k=k_0+k_1$. Moreover, the following claim
holds:

\begin{claim} \label{cl:useful1}
If $n \ge (1+\gamma) \delta/(\varepsilon^2 \hat{p})$, then
\[
  k_0 - \log_{1+\gamma} (\delta/(\varepsilon^2 \cdot p_{t_0})) \le k'_1+k_2
\]
where $k'_1$ is the number of useful time steps with a successful
transmission in which the sender is different from the previously
successful sender.
\end{claim}
\begin{proof}
According to Fact~\ref{fa:access3}, $p_v \in [(1+\gamma)^{-1} p, p]$
for some access probability $p$ for all time steps in $I'$. Hence,
if $p_t \le \delta/\varepsilon^2$ and $n \ge
(1+\gamma)\delta/(\varepsilon^2 \hat{p})$, then $p_v(t) \leq
\hat{p}/(1+\gamma)$. This implies that whenever there is a useful
time step $t \in I$ with an idle channel, then $p_{t+1} = (1+\gamma)
p_t$. Thus, it takes at most $\log_{1+\gamma} (\delta/(\varepsilon^2
\cdot p_{t_0}))$ many useful time steps with an idle channel to get
from $p_{t_0}$ to a cumulative probability of at least
$\delta/\varepsilon^2$. On the other hand, each of the $k'_1$
successful transmissions reduces the cumulative probability by
$(1+\gamma)$. Therefore, once the cumulative probability is at
$\delta/\varepsilon^2$, we must have $k_0 \le k'_1+k_2$ since
otherwise there must be at least one useful time step where the
cumulative probability is more than $\delta/\varepsilon^2$, which
contradicts the definition of a useful time step.
\end{proof}

Since $p_{t_0} \ge 1/(f^2 (1+\gamma)^{\sqrt{2f}})$ it holds that
\[
  \log_{1+\gamma} (\delta/(\varepsilon^2 \cdot p_{t_0}))
  \le \log_{1+\gamma} (\delta f^2/\varepsilon^2) + \sqrt{2f}
\]
From Lemma~\ref{cl_Tinc} we also know that $k_2 \le k_0/2 +
\sqrt{f}$. Hence,
\begin{footnotesize}
\begin{eqnarray*}
  k_0 & \le & 2k'_1 + 2\cdot\log_{1+\gamma} (\delta f^2/\varepsilon^2) + 2 \cdot (\sqrt{f} +
    \sqrt{2f})\\
  & \le & 2k'_1 + 6\sqrt{f}
\end{eqnarray*}
\end{footnotesize} if $f$ is sufficiently large. Also, $k_0 = k-k_1$
and $k'_1 \le k_1$. Therefore, $k-k_1 \le 2k_1 + 6\sqrt{f}$ or
equivalently,
\[
  k_1 \ge k/3 - 2\sqrt{f}
\]
It remains to find a lower bound for $k$.

\begin{claim} \label{cl:useful2}
Let $g$ be the number of non-jammed time steps $t$ in $I'$ with $p_t
\le \delta/\varepsilon^2$. If $g \ge \varepsilon f/2$ then
\[
  k \ge \frac{\delta}{2(1-\hat{p})\varepsilon^2}
  e^{-\delta/(1-\hat{p})\varepsilon^2} \cdot g
\]
w.h.p.
\end{claim}
\begin{proof}
Consider any $(T,1-\varepsilon)$-bounded jammer for $I'$. Suppose
that of the non-jammed time steps $t$ with $p_t \le
\delta/\varepsilon^2$, $s_0$ have an idle channel and $s_1$ have a
busy channel. It holds that $s_0+s_1 = g \ge \varepsilon f/2$. For
any one of the non-jammed time steps with an idle channel, the
probability that it is useful is one, and for any one of the
non-jammed time steps with a busy channel, the probability that it
is useful (in this case, that it has a successful transmission) is
at least
\begin{footnotesize}
\begin{eqnarray*}
  \sum_v p_v \prod_{w\not=v} (1-p_w) & \ge & \frac{1}{1-\hat{p}} \sum_v p_v
    \prod_w (1-p_w) \\
  & \ge & \frac{1}{1-\hat{p}} \sum_v p_v \prod_w e^{-p_w/(1-\hat{p})} \\
  & = & \frac{1}{1-\hat{p}} \sum_v p_v e^{-p/(1-\hat{p})} \\
  & = & \frac{p}{1-\hat{p}} e^{-p/(1-\hat{p})}
\end{eqnarray*}
\end{footnotesize} where $p$ is the cumulative probability at the
step. Since $p_t \le \delta/\varepsilon^2$, it follows that the
probability of a busy time step to be useful is at least
\[
  \frac{\delta}{(1-\hat{p})\varepsilon^2} e^{-\delta/(1-\hat{p})\varepsilon^2}
\]
Thus,
\[
  \E[k] \ge s_0 + \frac{\delta}{(1-\hat{p})\varepsilon^2}
  e^{-\delta/(1-\hat{p})\varepsilon^2} s_1
  \]
  \[
  \ge \frac{\delta}{(1-\hat{p})\varepsilon^2} e^{-\delta/(1-\hat{p})\varepsilon^2} \cdot g
\]
since $k$ is minimized for $s_0=0$ and $s_1 = g$.

Since our lower bound for the probability of a busy step to be
useful holds independently for all non-jammed busy steps $t$ with
$p_t \le \delta/\varepsilon^2$ and $E[k] \ge \alpha \log N$ for our
choice of $g$, it follows from the Chernoff bounds that $k \ge
\E[k]/2$ w.h.p.
\end{proof}

From Claim~\ref{cl:useful2} it follows that
\[
  k_1 \ge (\frac{\delta}{2(1-\hat{p})\varepsilon^2}
  e^{-\delta/(1-\hat{p})\varepsilon^2} \cdot g)/3 - 2\sqrt{f}
\]
w.h.p., which completes the proof of Lemma~\ref{lem_enough_useful}.
\end{proof}

It remains to consider the case that for less than half of the
non-jammed time steps $t$ in $I'$, $p_t \le \delta/\varepsilon^2$.
Fortunately, this does not happen w.h.p.

\begin{lemma} \label{lem:high}
Suppose that at the beginning of $I'$, $T_v \le \sqrt{F}/2$ for
every node $v$. Then at most half of the non-jammed time steps $t$
can have the property that $p_t > \delta/\varepsilon^2$ w.h.p.
\end{lemma}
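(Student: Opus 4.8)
The plan is a proof by contradiction. Suppose more than half of the non-jammed steps of $I'$ have $p_t>\delta/\varepsilon^2$; call such steps \emph{high}. Let $g\ge\varepsilon f$ be the number of non-jammed steps in $I'$ and $h>g/2$ the number of high ones. The guiding intuition is that while $p_t>\delta/\varepsilon^2$ the channel carries a collision with overwhelming probability, so the rule ``$c_v>T_v$'' keeps firing and each firing multiplies $p_t$ by $(1+\gamma)^{-1}$; the only mechanism that raises $p_t$ is an idle channel, and an idle channel at a high step has probability only $e^{-\Theta(\delta/\varepsilon^2)}$. Since $\delta$ is a large constant and the subframe has length $f=\Omega\!\big(\tfrac1{\varepsilon\gamma^2}e^{\delta/\varepsilon^2}\log^3 N\big)$, these rare ``lifting'' events cannot keep $p_t$ above the threshold on half of the non-jammed steps.

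First I would show that collisions dominate the high steps. At a high step, $q_0=\prod_v(1-p_v)\le e^{-p_t}<e^{-\delta/\varepsilon^2}$ by Lemma~\ref{lem:euler}, and by Claim~\ref{cl_relation} the probability of a successful transmission is $q_1\le q_0\,p_t/(1-\hat p)\le p_t e^{-p_t}/(1-\hat p)$, which, since $x\mapsto xe^{-x}$ is decreasing for $x>1$, is at most $\tfrac{\delta}{(1-\hat p)\varepsilon^2}e^{-\delta/\varepsilon^2}$. Hence the probability that a high step fails to be a pure collision is at most $\rho:=\big(1+\tfrac{\delta}{(1-\hat p)\varepsilon^2}\big)e^{-\delta/\varepsilon^2}\to 0$ as $\delta\to\infty$. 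With $X_t=1$ iff step $t$ is high and is idle or successful we have $\Pr[X_t=1\mid\text{history before }t]\le\rho$ regardless of the adversary, so the product-form Chernoff bound of Lemma~\ref{lem_chernoff}, applied as in Claim~\ref{cl_pup} with a union bound over the at most $f$ adversarial jamming patterns, gives $\sum_{t\in I'}X_t\le 2\rho f+O(\log N)$ w.h.p.; for $\delta$ large and by the choice of $f$ this is $\le\varepsilon f/16\le h/8$. Thus w.h.p.\ at least $7h/8$ of the high steps are pure collisions and at most $h/8$ idle steps occur at high $p_t$.

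Next I would exploit the overflow mechanism. By Fact~\ref{fa:access1} all nodes share a common counter $c$ and window estimate $T$ after the first successful transmission (or from a well-initialized start), and by Fact~\ref{fa:access3} each productive firing of ``$c_v>T_v$'' multiplies $p_t$ by $(1+\gamma)^{-1}$ simultaneously; since a productive firing needs about $T$ consecutive non-idle rounds, the counting of Claim~\ref{cl_Tinc} bounds by $O(\sqrt f)$ the number of times $T$ grows in $I'$, so the hypothesis $T\le\sqrt F/2$ at the start of $I'$ gives $T=O(\sqrt F)$ throughout. I would then track the potential $\Phi_t:=\log_{1+\gamma}p_t$, which rises by at most $1$ per idle round, is non-increasing at every other round, and drops by $1$ at every productive firing and every change of successful sender. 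Because $c$ is incremented every round and reset only on firing, at least $\Omega(f/\sqrt F)$ firings occur; charging the $\le h/8$ idle rounds at high $p_t$ and the idle rounds at low $p_t$ (each offset by the subsequent climb of $\Phi_t$ back up to the threshold) against these firings, and using $\Phi_{t_0}\le\log_{1+\gamma}(n\hat p)=O(\tfrac1\gamma\log N)$ and $f/\sqrt F=\Omega(\tfrac{\beta}{\gamma}e^{\delta/(2\varepsilon^2)}\log N)$, one gets that $p_t>\delta/\varepsilon^2$ on only $o(\varepsilon f)$ rounds of $I'$---contradicting $h>g/2\ge\varepsilon f/2$. Together with Lemma~\ref{lem_enough_useful}, this gives constant-competitiveness of $I'$.

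The main obstacle, I expect, is the last point: the \emph{dual role of an idle round}---it both lifts $p_t$ and resets the firing counter, so an idle round can also ``spoil'' a firing that would otherwise have drained $p_t$. A crude count of spoiled firings is too lossy; the argument must be quantitative, using that the idle probability at a high step is $e^{-\Theta(\delta/\varepsilon^2)}$ together with the generous length $f=\Omega(\tfrac1{\varepsilon\gamma^2}e^{\delta/\varepsilon^2}\log^3 N)$ so that, after the Chernoff concentration above, the combined ``lifting power'' of all idle rounds is dominated by the $\Omega(f/\sqrt F)$ firings. Making these constants line up, and cleanly treating the boundary cases where some $p_v$ is capped at $\hat p$, is the delicate part; the remainder is routine geometric-series and Chernoff bookkeeping.
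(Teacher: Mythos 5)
There is a genuine gap, and it lies exactly where you suspected: the draining mechanism. Your plan bounds the \emph{non-collision} probability at high steps (lumping idle and success together as rare events), and then relies on the overflow rule $c_v>T_v$ to pull $p_t$ down, claiming $\Omega(f/\sqrt{F})$ productive firings and concluding that the \emph{total} number of rounds with $p_t>\delta/\varepsilon^2$ is $o(\varepsilon f)$. Neither step survives scrutiny. First, the count of productive firings is unjustified: a firing requires no idle step in the preceding $T_v$ rounds, $T_v$ can be as large as $\Theta(\sqrt{F})$ throughout $I'$, and idle rounds at \emph{low} $p_t$ (which can be a constant fraction of $I'$) spoil the firing checks; nothing in your charging scheme rules out that almost no productive firing ever occurs. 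Second, even granting the firings, the arithmetic does not close: each level of $\log_{1+\gamma}p_t$ above the threshold takes up to $\Theta(\sqrt{F})=\Theta\bigl(\frac{1}{\varepsilon\gamma}e^{\delta/(2\varepsilon^2)}\log^2 N\bigr)$ busy rounds to drain by firings, while the number of climbs (idle rounds near or above the threshold) is only bounded by roughly $e^{-\delta/\varepsilon^2}f$ w.h.p.; their product is $\Theta\bigl(\frac{\log^2 N}{\varepsilon\gamma}e^{-\delta/(2\varepsilon^2)}f\bigr)$, which for large $N$ exceeds $\varepsilon f$ because $e^{\delta/(2\varepsilon^2)}$ is a constant whereas $\log^2 N/\gamma$ grows. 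Indeed your intermediate target is simply too strong: the adversary \emph{can} keep $p_t$ above the threshold for long stretches by jamming every step of the excursion (then only the slow firing mechanism acts), so ``total high time $=o(\varepsilon f)$'' is not provable; the lemma only claims that few \emph{non-jammed} steps are high, and jammed high stretches are harmless for that statement.

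The idea your proof is missing is the asymmetry that the paper exploits: at $p_t>\delta/\varepsilon^2$ a non-jammed step is a (sender-changing) \emph{success} with probability at least about $\frac{(1+\gamma)\delta}{(1-\hat{p})e^2\varepsilon^2}$ times the probability of an \emph{idle} channel (Claim~\ref{cl_relation} plus Claim~\ref{cl:high2}), and every such success drops $p_t$ by a $(1+\gamma)$ factor while only idles raise it. The paper turns this into a contradiction via a purely structural ``pass'' decomposition of the high steps, yielding $k_0\ge k_1'-\Delta$ (Claims~\ref{cl:high1}--\ref{cl:high1b}), handles the regime $p_t\ge\Phi=\ln(f/\log N)$ and small $k$ with Claim~\ref{cl:Tvbound}, and then shows via Chernoff that if at least $\varepsilon f/2$ non-jammed steps were high, the sender-changing successes would outnumber the idles w.h.p., violating the structural inequality. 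Your step~1 discards precisely this success-versus-idle comparison by upper-bounding successes together with idles, which is why the remaining firing-based drift bookkeeping cannot be made to line up; the overflow rule is used by the paper only for the coarse bound of Claim~\ref{cl:Tvbound} (and in Claim~\ref{cl_Tinc}), not as the main drain.
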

\begin{proof}
Recall from Fact~\ref{fa:access3} that as long as the access
probabilities of the nodes do not hit $\hat{p}$, the cumulative
probability only changes by a $(1+\gamma)$-factor in both
directions. Suppose that $\delta$ is selected so that
$\delta/\varepsilon^2$ represents one of these values. Let $H$ be
the set of time steps $t \in I'$ with the property that either
$p_t=\delta/\varepsilon^2$ and the channel is idle or $p_t \ge
(1+\gamma)\delta/\varepsilon^2$. Now, we define a step $t$ to be
useful if $t \in H$ and there is either an idle channel or a
successful transmission at $t$. Let $k$ be the number of useful time
steps in $H$. Furthermore, let $k_0$ be the number of useful time
steps with an idle channel, $k_1$ be the number of useful time steps
with a successful transmission and $k_2$ be the maximum number of
times a node $v$ reduces $p_v$ in $H$ because of $c_v>T_v$. It holds
that $k=k_0+k_1$.

Let us cut the time steps in $H$ into {\em passes} where each pass
$(t,p,S)$ consists of a time step $t$ with $p_t=p$ in which there is
an idle channel (or $t$ is the beginning of $I'$ if there is no such
idle channel in $I'$) and $S$ is the sequence of all non-idle time
steps $t'>t$ with $p_{t'} = (1+\gamma)p$ following $t$ until a time
step $t''$ is reached in which $p_{t''} < p$ (or the end of $I'$ is
reached if there is no such step). $t''$ is either due to $c_v>T_v$
or a successful transmission. More precisely, we require that for
any pair of passes $(t,p,S)$ and $(t',p',S')$ with $p'=p$ and final
time step $t''$ in $S$, $(t' \cup S') \cap [t,t''] = \emptyset$, but
passes with $p\not=p'$ are allowed to violate this (by one being
nested into the other). It is not difficult to see that for any
distribution of cumulative probabilities over the time steps of $I'$
one can organize the time steps in $H$ into passes as demanded
above. Based on that, the following claim can be easily shown, where
$k'_1 \le k_1$ is the number of useful time steps with a successful
transmission by a node different from the previously successful
node.

\begin{claim} \label{cl:high1}
\[
  k_0 \ge k'_1 - \log_{1+\gamma} \max\{p_0/(\delta/\varepsilon^2), 1\}
\]
where $p_0$ is the initial cumulative probability in $I'$.
\end{claim}

This is because there can be at most $\log_{1+\gamma}
\max\{p_0/(\delta/\varepsilon^2), 1\}$ many passes not starting with
an idle step but the initial step of $I'$, and every pass has at
most one step counting towards $k'_1$. This also implies the
following claim.

\begin{claim} \label{cl:high1b}
For any collection $P$ of passes,
\[
  k_0 \ge k'_1 - \Delta
\]
where $k_0$ and $k'_1$ are defined w.r.t.~these passes and $\Delta$
is the number of different $p$-values in $P$.
\end{claim}

Also, the following claim holds.

\begin{claim} \label{cl:Tvbound}
\[
  |H| \le (k + \log_{1+\gamma} \max\{p_0/(\delta/\varepsilon^2), 1\}) \sqrt{F}
\]
where $p_0$ is the initial cumulative probability in $I'$.
\end{claim}
\begin{proof}
If at the beginning of $I'$, $T_v \le \sqrt{F}/2$ for every node
$v$, then $T_v \le \sqrt{F}$ for every node $v$ at any time during
$I'$. Hence, after at most $2\sqrt{F}$ non-useful steps we run into
the situation that $c_v>T_v$ for every node $v$, which reduces the
cumulative probability by a factor of $(1+\gamma)$. Given that we
only have $k$ useful steps and we may initially start with a
probability $p_0 > \delta/\varepsilon^2$, there can be at most $(k +
\log_{1+\gamma} \max\{p_0/(\delta/\varepsilon^2), 1\}) \sqrt{F}$
time steps in $H$, which proves the claim.
\end{proof}

For the calculations below recall the definition of $f$ with the
constants $\alpha$ and $\beta$ that are assumed to be sufficiently
large. If $k \le \alpha \log N$, then it follows from
Claim~\ref{cl:Tvbound} that
\[
  |H| \le (\alpha \log N + \log_{1+\gamma} N) \sqrt{F} \le \varepsilon f/\beta
\]
Thus, the number of non-jammed time steps in $H$ is also at most
$\varepsilon f/\beta$, and since $\beta$ can be arbitrarily large,
Lemma~\ref{lem:high} follows.

It remains to consider the case that $k > \alpha \log N$. Let us
assume that $H$ contains at least $\varepsilon f/2$ non-jammed time
steps. Our goal is to contradict that statement in order to show
that the lemma is true. For this we will show that
Claim~\ref{cl:high1b} is violated w.h.p.

Let $T_p$ be the number of all time steps covered by passes
$(t',p',S')$ with $p'=p$. Certainly, $\sum_{p\ge
\delta/\varepsilon^2} T_p = |H|$. Let a pass $(t,p,S)$ be called
{\em bad} if the jamming rate in $S$ is more than
$(1-\varepsilon/8)$. A cumulative probability $p$ is called {\em
bad} if the number of time steps covered by bad passes in $p$ is
more than $(1-\varepsilon/8)T_p$. A bad $p$ contains at least
$(1-\varepsilon/8)^2 T_p$ jammed time steps. Since the number of
jammed time steps in $H$ is at most $|H|-\varepsilon f/2$ it holds
that
\[
  \sum_{p \; {\rm bad}} (1-\varepsilon/8)^2 T_p \le |H|-\varepsilon f/2
\]
Hence, it holds for the good probabilities that
\begin{footnotesize}
\begin{eqnarray*}
  \sum_{p \; {\rm good}} T_p & = & |H| - \sum_{p \; {\rm bad}} T_p \\
  & \ge & |H| - (1-\varepsilon/8)^{-2} (|H| - \varepsilon f/2) \\
  & \ge & f - (1-\varepsilon/8)^{-2} (f - \varepsilon f/2) \ge \varepsilon f/4
\end{eqnarray*}
\end{footnotesize}
In the following, let $\phi =
\delta/\varepsilon^2$ and $\Phi = \ln (f/\log N)$. For each $p \ge
\phi$ let $b_p$ be the number non-idle time steps among the $T_p$
time steps associated with $p$-passes and $k_{0,p}$ be the number of
idle time steps associated with $p$-passes. A good probability $p$
is called {\em helpful} if $b_p \ge k_{0,p} / \Pr[\mbox{idle} \mid
p]$ and $p < \Phi$.

For a cumulative probability $p\ge \Phi$, $\Pr[\mbox{idle} \mid p]
\le e^{-\Phi} = (\log N)/f$ and $\Pr[\mbox{success} \mid p] \le \Phi
e^{-\Phi} \le \ln (f/\log N) \cdot (\log N)/f$. Hence, $k \le \ln f
\cdot \log N$ on expectation, and from the Chernoff bounds it
follows that $k \le 2 \ln f \cdot \log N$ w.h.p., so
Claim~\ref{cl:Tvbound} implies that the number of time steps in $I'$
with cumulative probability $p \ge \Phi$ is at most
\[
  (2 \ln f \cdot \log N + \log_{1+\gamma} N) \sqrt{F} \le \varepsilon f / \beta
\]
If we sum up over all non-helpful probabilities $p$ with $\phi \le p
< \Phi$, they cover at most
\[
  \sum_{i=0}^{\log_{1+\gamma} \Phi} 1/e^{-(1+\gamma)^i}
  \le 2 \cdot f/\log N = o(f)
\]
many time steps, so
\[
  \sum_{p \; {\rm helpful}} T_p \ge \varepsilon f/6
\]
if $\beta$ is large enough. If $\phi \le p_t < \Phi$ and $\Phi \le
1/\gamma$ (which is true if $\gamma = O(1/(\log T + \log \log n))$
is small enough), then it holds for any time step $t'$ with $p_{t'}
\le (1+\gamma) p_t$ that
\begin{footnotesize}
\begin{eqnarray*}
  \lefteqn{ \Pr[\mbox{successful transmission at $t'$}] } \\
  & = & \sum_v p_v(t') \prod_{w \not=v} (1-p_w(t')) \\
  & \ge & \sum_v (1+\gamma)p_v(t) \prod_{w \not=v} (1-(1+\gamma)p_w(t)) \\
  & \ge & \sum_v \frac{(1+\gamma) p_v(t)}{1-\hat{p}} \prod_w (1-(1+\gamma)p_w(t)) \\
  & \ge & \sum_v \frac{(1+\gamma) p_v(t)}{1-\hat{p}} e^{-\sum_w
    (1+\gamma)p_w(t)/(1-(1+\gamma)p_w(t))} \\
  & = & \sum_v \frac{(1+\gamma) p_v(t)}{1-\hat{p}}
    e^{-(1+\gamma)p_t/(1-(1+\gamma)p_t/n)} \\
  & = & \frac{(1+\gamma)p_t}{1-\hat{p}} e^{-(1+\gamma)p_t - 2 p_t^2/n} \\
  & \ge & \frac{(1+\gamma)p_t}{1-\hat{p}} e^{-p_t - 2} \\
  & \ge & \frac{(1+\gamma)\delta}{(1-\hat{p})e^2 \varepsilon} \Pr[\mbox{idle channel at $t$}]
\end{eqnarray*}
\end{footnotesize} Let $k_{1,p}$ be the number of successful time
steps associated with $p$-passes. For each helpful probability $p$
it holds that $\E [ k_{1,p} ]$ is at least
\begin{footnotesize}
\begin{eqnarray*}
   & & \varepsilon/8 \cdot k_{0,p} \cdot
    \frac{\varepsilon/8}{\Pr[\mbox{idle} \mid p]} \cdot \Pr[ \mbox{success} \mid
    (1+\gamma)p ] \\
  & \ge & \varepsilon/8 \cdot k_{0,p} \cdot
    \frac{\varepsilon/8}{\Pr[\mbox{idle} \mid p]} \cdot
    \frac{(1+\gamma) \delta}{(1-\hat{p})e^2 \varepsilon^2} \cdot \Pr[\mbox{idle} \mid
    p] \\
  & \ge & 2
\end{eqnarray*}
\end{footnotesize} if $\delta$ is a sufficiently large constant.
Hence, $\sum_{p \; {\rm helpful}} k_{1,p} \ge 2 \sum_{p \; {\rm
helpful}} k_{0,p}$ and since
\begin{eqnarray*}
  \lefteqn{ \sum_{p \; {\rm helpful}} T_p \cdot \Pr[ \mbox{success} \mid (1+\gamma)p ]} \\
  & \ge & (\varepsilon f/6) \cdot \Phi \cdot e^{-\Phi/(1-\Phi/n)} \\
  & \ge & (\varepsilon f/6) \cdot \Phi \cdot e^{-\Phi - 1} \\
  & = & (\varepsilon f/6) \cdot \ln(f/\log N) \cdot (e \log N)/f \ge c \log N
\end{eqnarray*}
for any constant $c$, the Chernoff bounds imply that $\sum_{p \;
{\rm helpful}} k_{1,p} \ge (3/2) \sum_{p \; {\rm helpful}} k_{0,p}$
w.h.p. In order to proceed, we need the following claim.

\begin{claim} \label{cl:high2}
For any collection $P$ of passes it holds that
\[
  \E[k'_1] \ge (1-(1+\gamma)/n)k_1
\]
where $k_1$ and $k'_1$ are defined w.r.t.~$P$.
\end{claim}
\begin{proof}
Because of Fact~\ref{fa:access3}, the probability that a successful
transmission is done by a node different from the node of the last
successful transmission is equal to
\[
  1- \frac{(1+\gamma)p}{(n + \gamma) p} \ge 1-\frac{1+\gamma}{n}.
\]
To see this, observe that among the cumulative probability $p$, if
the last sender $u$ has a share $p_u(t)=x$, all other nodes $v$ have
a share $x/(1+\gamma)$, and
\[
  \frac{p_u(t)}{\sum_{v\in V} p_v(t)} = \frac{x}{(n-1)\cdot
  \frac{x}{1+\gamma}+x} = \frac{1+\gamma}{n+\gamma}
\]
Hence, $\E[k'_1] \ge (1-(1+\gamma)/n)k_1$.
\end{proof}

The claim implies that
\begin{footnotesize}
\[
  \sum_{p \; {\rm helpful}} k'_{1,p} > \sum_{p \; {\rm helpful}} k_{0,p} +
  \log_{1+\gamma} \max\{\Phi/(\delta/\varepsilon^2), 1\}
\]
\end{footnotesize}
w.h.p., which violates Claim~\ref{cl:high1b}. This
completes the proof of Lemma~\ref{lem:high}.
\end{proof}

Notice that by the choice of $f$ and $F$, $T_v$ never exceeds
$\sqrt{F}/2$ for any $v$ when initially $T_v=1$ for all $v$. Hence,
the prerequisites of the lemmas are satisfied. We can also show the
following lemma, which shows that $T_v$ remains bounded over time.

\begin{lemma}\label{lemma:tv}
For any time frame $I$ in which initially $T_v \le \sqrt{F}/2$ for
all $v$, also $T_v \le \sqrt{F}/2$ for all $v$ at the end of $I$
w.h.p.
\end{lemma}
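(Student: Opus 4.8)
The plan is to control the quantity $M_t := \max_{v} T_v$ throughout the frame $I$ and show it never rises above $\sqrt{F}/2$. Two ingredients drive the argument: a \emph{probabilistic} lower bound on the number of idle steps in every sufficiently long sub-window of $I$, and a \emph{deterministic} drift argument showing that, given such a supply of idle steps, $M_t$ cannot build up.

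For the probabilistic ingredient I would proceed subframe by subframe. Inductively assume $M_t \le \sqrt{F}/2$ at the start of subframe $I'_j$ (the base case $j=1$ is the hypothesis of the lemma). Then Lemma~\ref{lem:high} applies to $I'_j$, so w.h.p.\ at least half of the $\ge \varepsilon f$ non-jammed steps of $I'_j$ have $p_t \le \delta/\varepsilon^2$. Whenever $p_t \le \delta/\varepsilon^2$, the channel is idle with probability $\prod_v (1-p_v) \ge \prod_v e^{-p_v/(1-\hat{p})} = e^{-p_t/(1-\hat{p})} \ge e^{-\delta/((1-\hat{p})\varepsilon^2)} =: \rho$ (a positive constant, using Lemma~\ref{lem:euler}), and this holds conditioned on the history. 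Since $f$ is large enough that $\rho \varepsilon f/4 \ge c\log N$, the one-sided Chernoff bound of Lemma~\ref{lem_chernoff} gives that $I'_j$ contains at least $\rho' \varepsilon f$ idle steps w.h.p.\ with $\rho' = \rho/4$; as in the last paragraph of the proof of Claim~\ref{cl_pup}, one sums the failure probabilities over the at most $f$ adaptive choices of which non-jammed steps are the low-$p_t$ ones. The identical reasoning, applied to sub-windows, shows that every window of $I$ of size $w \ge T$ contains at least $\rho' \varepsilon w$ idle steps w.h.p.

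For the deterministic ingredient, recall that $M_t$ decreases by $1$ at each idle step and increases by $2$ only when the counters wrap without an idle step in the preceding $M_t$ rounds; moreover, by the $(T,1-\varepsilon)$-bound no run of $T$ consecutive jammed steps exists, so a maximal non-idle run is essentially a concatenation of jammed runs (each of length $<T$) glued by the rare busy steps, which are few because $p_t \le \delta/\varepsilon^2$ most of the time (Lemma~\ref{lem:high}). Hence every non-idle run has length $O(T)$, and Claim~\ref{cl_Tinc} bounds the number of $+2$ increments inside a non-idle run of length $L$ by $\min\{\sqrt{L},\, 2L/M_t\}$; summing over a window of size $T$ yields at most $O(T/M_t)$ increments once $M_t \ge 2\sqrt{T}$, against at least $\rho'\varepsilon T$ decrements in that window. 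Consequently, once $M_t$ exceeds $\max\{2\sqrt{T}, c'/\varepsilon\}$ for a suitable constant $c'$, it strictly (indeed, by $\Omega(\varepsilon T)$) decreases over every window of size $T$; together with the observation that $M_t$ can grow by at most $O(\sqrt{T})$ before such a decrease takes over, this gives $M_t = O(\sqrt{T} + 1/\varepsilon) = O(\sqrt{f})$ throughout $I$, which is below $\sqrt{F}/2$ by the choice of the constant $\alpha$ in the definitions of $f$ and $F$ (since $\sqrt{F}/2 = \tfrac12\sqrt{(\alpha/\varepsilon)\log N}\,\sqrt{f}$). Feeding this back in re-establishes $M_t \le \sqrt{F}/2$ at the start of $I'_{j+1}$, and a union bound over the $(\alpha/\varepsilon)\log N$ subframes completes the proof.

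The main obstacle is the deterministic drift step: idle steps are not evenly spread but may be clustered by the adversary, so one must carefully argue that the only mechanism for $M_t$ to accumulate — long non-idle runs — is capped at roughly $(1-\varepsilon)T$ by the jamming budget, and that the busy (collision) steps which could lengthen such a run, and the steps with $p_t > \delta/\varepsilon^2$, are rare enough w.h.p.\ not to break the count. A secondary subtlety is the circular dependence on Lemma~\ref{lem:high} (whose hypothesis is exactly the bound being proved), which is handled by the forward induction over subframes with the union bound deferred to the end.
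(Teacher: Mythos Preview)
Your probabilistic ingredient has a real gap that comes from underestimating the reactive adversary. You claim that each of the $\ge \varepsilon f/2$ non-jammed steps with $p_t \le \delta/\varepsilon^2$ (supplied by Lemma~\ref{lem:high}) is idle with probability at least $\rho = e^{-\delta/((1-\hat p)\varepsilon^2)}$ and then apply Chernoff. But the adversary decides whether to jam step $t$ \emph{after} it sees whether any node transmits; it can therefore jam precisely the no-transmission steps. Since in the whole subframe there are only about $\rho f \ll (1-\varepsilon)f$ such steps, the jamming budget comfortably covers all of them, and the adversary can arrange that every non-jammed step is busy. Conditioning on ``non-jammed'' thus destroys your lower bound on the idle probability, and the union bound you invoke (over ``at most $f$ adaptive choices'') does not help: the adversary's relevant choice is \emph{which} steps to jam, not merely how many, and for every such choice it can still zero out the observed idle steps. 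The same issue undermines your drift step: the assertion that every non-idle run has length $O(T)$ relies on non-jammed steps frequently being idle, which the reactive adversary can prevent.

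The paper closes this gap by a different mechanism that you never invoke: the multiplicative balance of $p_t$ together with the lower bound of Lemma~\ref{lem_prob}. Every observed idle step multiplies $p_t$ by $(1+\gamma)$, while every success by a new sender and every $c_v>T_v$ event divides it by $(1+\gamma)$; since $p_t$ cannot drop below $1/(f^2(1+\gamma)^{\sqrt{2f}})$, one gets $k_0 \ge k_1' + k_2 - O(\log_{1+\gamma} f + \sqrt{f})$, and with Claims~\ref{cl_Tinc} and~\ref{cl:high2} this yields $k_0 \ge k_1/3$ w.h.p. Combined with $k_0+k_1=k=\Omega(f)$ (the useful-step count from Lemma~\ref{lem:high} and Claim~\ref{cl:useful2}), this forces $k_0=\Omega(f)$ \emph{regardless of how the adversary targets its jamming}. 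From there the paper argues directly that $T_v$ must hit $1$ somewhere in each subframe, so at the end of the frame $T_v \le 1+2\sqrt{f} \le \sqrt{F}/2$; your drift machinery is then unnecessary.
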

\begin{proof}
We already know that in each subframe $I'$ in $I$, at least
$\varepsilon f/2$ of the non-jammed time steps $t$ in $I'$ satisfy
$p_t \le \delta/\varepsilon^2$ w.h.p. Hence, for all
$(T,1-\varepsilon)$-bounded jamming strategies, there are at least
\[
  (\delta/\varepsilon^2) \cdot e^{-\delta/\varepsilon^2} \cdot \varepsilon f/2
\]
useful time steps in $I'$ w.h.p. Due to the lower bound of $p_t \ge
1/(f^2 (1+\gamma)^{\sqrt{f}})$ for all time steps in $I$ w.h.p.~we
can also conclude that
\[
  k_0 \ge k'_1 + k_2 - \log_{1+\gamma} ((\delta/\varepsilon^2) \cdot f^2
  (1+\gamma)^{\sqrt{f}})
\]
Because of Claims~\ref{cl_Tinc} and \ref{cl:high2} it follows that
\[
  k_0 \ge k_1/3
\]
w.h.p. Since $k_0+k_1 = k$ and $k \ge (\delta/\varepsilon^2) \cdot
e^{-\delta/\varepsilon^2} \cdot \varepsilon f/2$ it follows that
$k_0 = \Omega(f)$. Therefore, there must be at least one time point
in $I'$ with $T_v=1$ for all $v \in V$. This in turn ensures that
$T_v \le \sqrt{F}/2$ for all $v$ at the end of $I$ w.h.p.
\end{proof}

With Lemma~\ref{lemma:tv}, we show that Lemma~\ref{lem:high} is true
for a polynomial number of subframes. Then, Lemma~\ref{lem:high} and
Lemma~\ref{lemma:tv} together imply that
Lemma~\ref{lem_enough_useful} holds for a polynomial number of
subframes. Hence, our main Theorem~\ref{th_main} follows. Along the
same line as in~\cite{singlehop08}, we can show that
\textsc{AntiJam} is self-stabilizing, so the throughput result can
be extended to an arbitrary sequence of time frames.

\begin{figure*} [t]
\begin{center}
\includegraphics[width=0.4\textwidth]{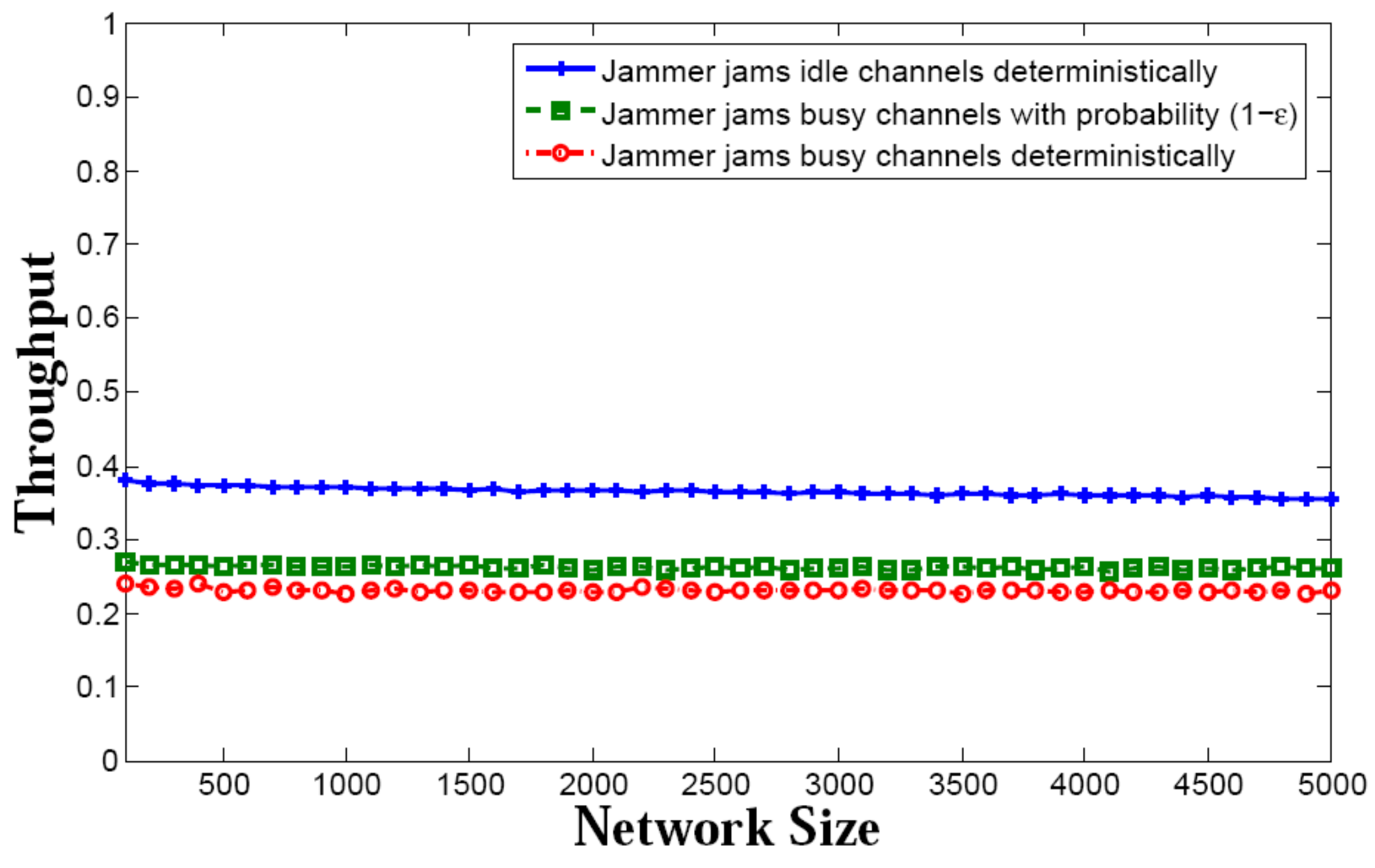}~\includegraphics[width=0.4\textwidth]
{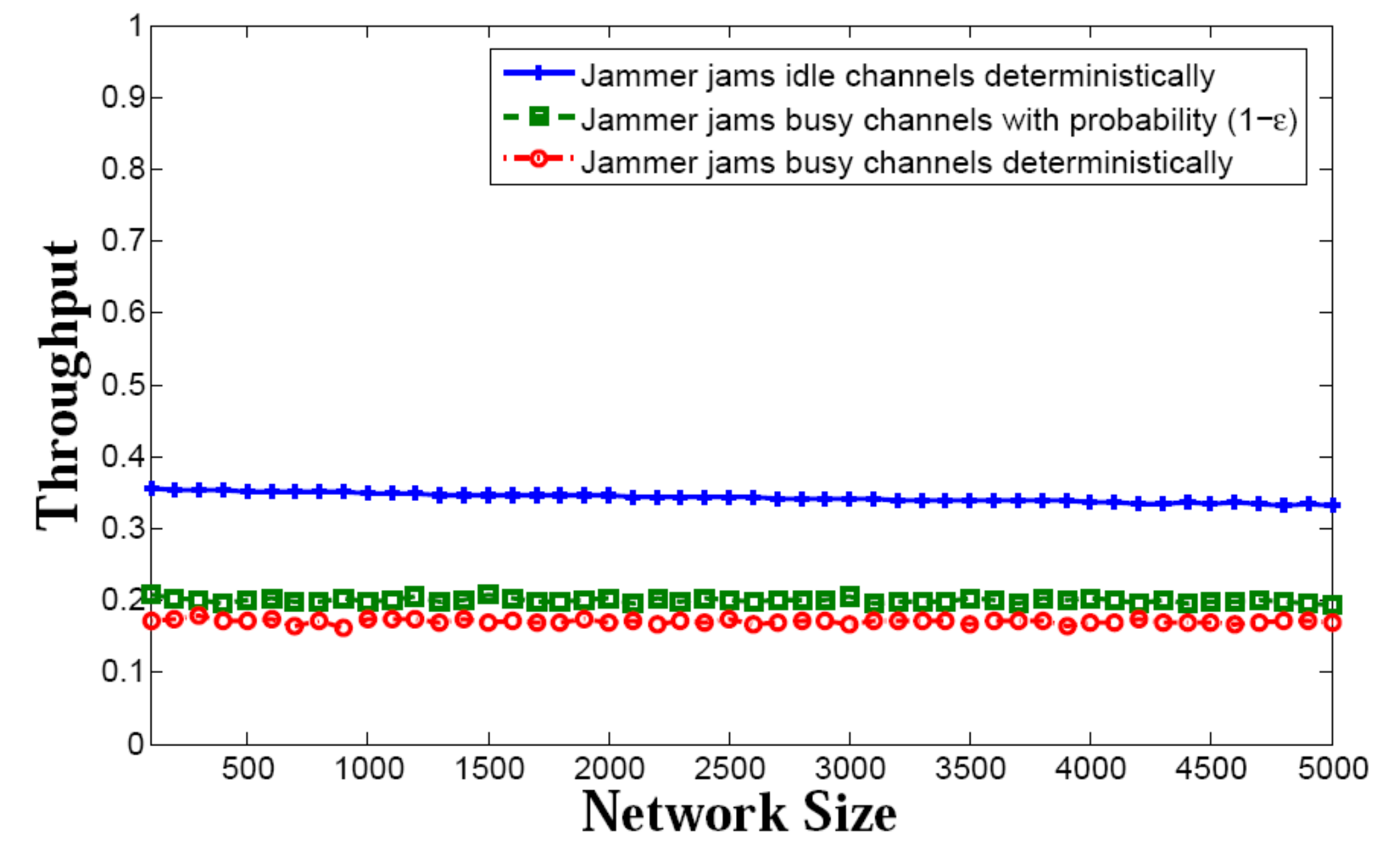}\\
\caption{Throughput under three different jamming strategies as a
function of the network size and $\varepsilon$, where $\hat{p}=1/24$
(\emph{left:} $\varepsilon=0.5$, \emph{right:}
$\varepsilon=0.3$).}\label{fig:jamming-strategies-comp}
\end{center}
\end{figure*}

\begin{figure*} [t]
\begin{center}
\includegraphics[width=0.4\textwidth]{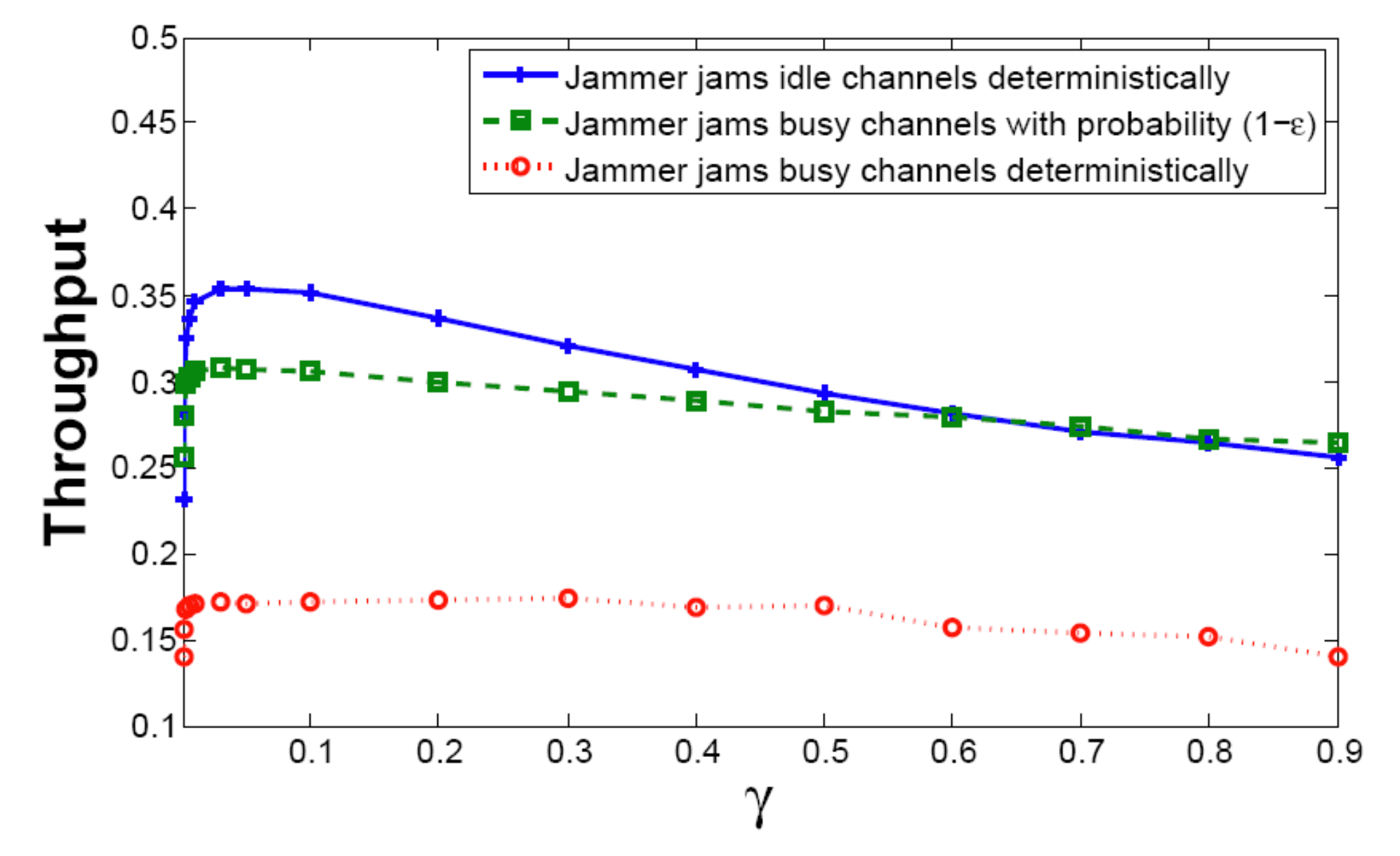}~\includegraphics[width=0.4\textwidth]
{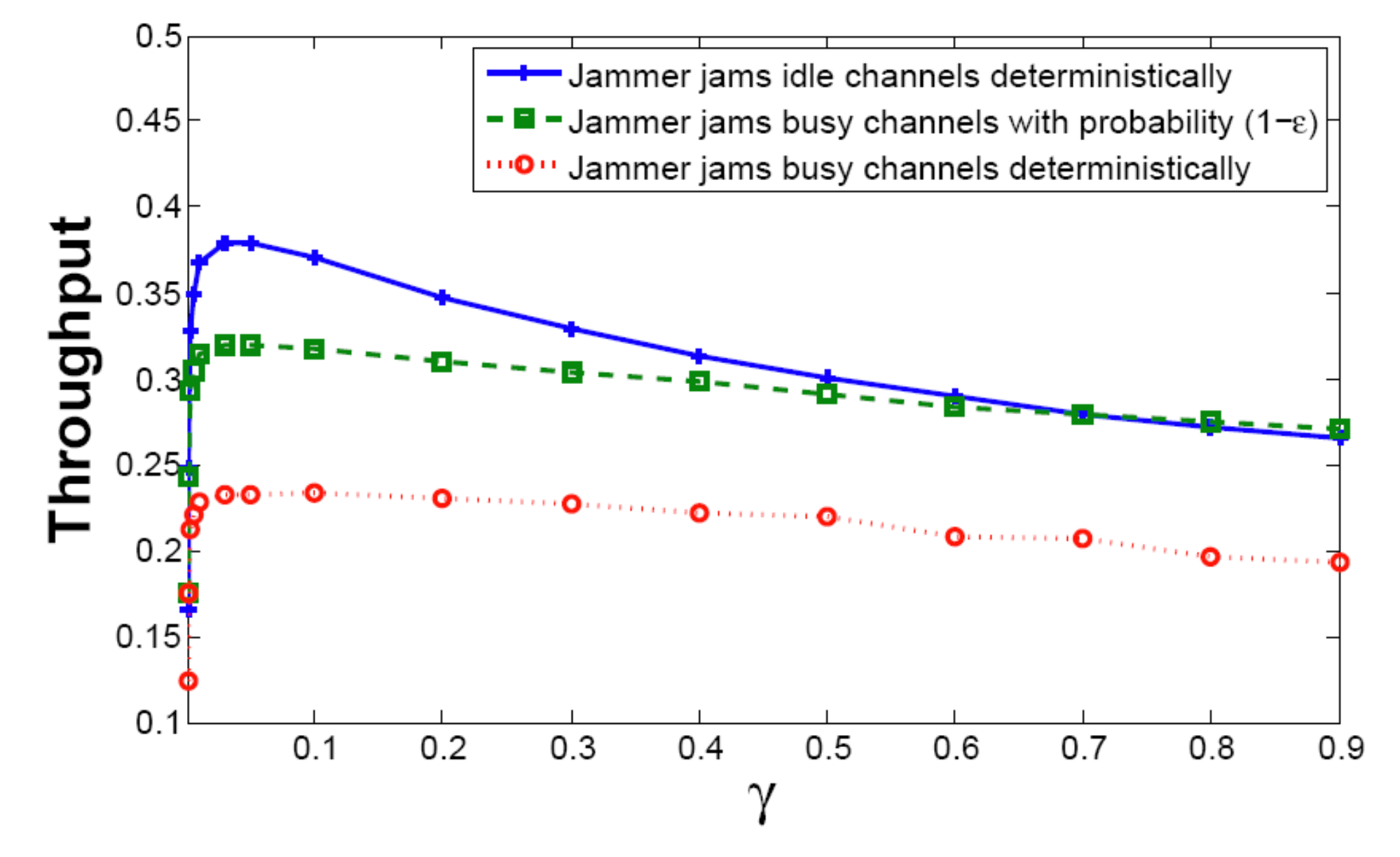}\\
\caption{Throughput as a function of $\gamma$ under three different
jamming strategies. \emph{Left:} $\varepsilon=0.2$, \emph{Right:}
$\varepsilon=0.5$).}\label{fig:gamma}
\end{center}
\end{figure*}

\section{Simulation}\label{sec:experiments}

We have implemented a simulator to study additional properties of
our protocol. This section reports on some of our results. Our focus
here is on the qualitative nature of the performance of
\textsc{AntiJam}, and we did not optimize the parameters to obtain
the best constants. We consider three different jamming strategies
for a reactive jammer that is $(T,1-\varepsilon)$-bounded, for
different $\varepsilon$ values and where $T = 100$: (1) one that
jams busy channels with probability $(1-\varepsilon)$;
  (2) one that jams busy channels deterministically (as long the jamming budget is not used up);
    (3) one that jams idle channels deterministically (as long as the jamming budges is not used up).

We define throughput as the number of successful transmissions over
the number of non-jammed time steps.

\subsection{Throughput}

In a first set of experiments we study the throughput as a function
of the network size and $\varepsilon$. We evaluate the throughput
performance for each type of adversary introduced above, see
Figure~\ref{fig:jamming-strategies-comp}. For all three strategies,
the throughput is basically constant, independently of the network
size; this is in accordance with our theoretical insight of
Theorem~\ref{th_main}. We can see that given our conditions on
$\varepsilon$ and $T$, the strategy that jams busy channels
deterministically results in the lowest throughput. Hence, in the
remaining experiments described in this section, we will focus on
this particular strategy. As expected, jamming idle channels does
not affect the protocol behavior much.

In our simulations, \textsc{AntiJam} makes effective use of the
non-jammed time periods, yielding 20\%-40\% successful transmissions
even without optimizing the protocol parameters. In additional
experiments we also studied the throughput as a function of
$\gamma$, see Figure~\ref{fig:gamma}. As expected, the throughput
declines slightly for large $\gamma$, but this effect is small.
(Note that for very small $\gamma$, the convergence time becomes
large and the experiments need run for a long time in order not to
underestimate the real throughput.)

\subsection{Convergence Time}

Besides a high throughput, fast convergence is the most important
performance criterion of a MAC protocol. The traces in
Figure~\ref{fig:convergence} show the evolution of the cumulative
probability over time. It can be seen that the protocol converges
quickly to constant access probabilities. (Note the logarithmic
scale.) If the initial probability for each node is high, the
protocol needs more time to bring down the low-constant cumulative
probability. Moreover, the ratio of the time period the cumulative
probability is in the range of $[\frac{1}{2\varepsilon},
\frac{2}{\varepsilon}]$ to the time period the protocol being
executed is $92.98\%$ when $\hat{p} = 1/24$, and $89.52\%$ when
$\hat{p} = 1/2$. This implies that for a sufficiently large time
period, the cumulative probability is well bounded most of the time,
which corresponds to our theoretical insights.
Figure~\ref{fig:convergence-over-size} studies  the convergence time
for different network sizes. We ran the protocol 50 times, and
assume that the execution has converged when the cumulative
probability $p$ satisfies $p\in[0.1, 10]$, for at least $5$
consecutive rounds. The simulation result also confirms our
theoretical analysis in Theorem~\ref{th_main}, as the number of
rounds needed to converge the execution is bounded by
$\Theta(\frac{1}{\varepsilon} \log N \max\{T, \frac{1}{\varepsilon
\gamma^2} \log^3 N\})$.

\begin{figure} [h]
\begin{center}
\includegraphics[width=0.9\columnwidth]{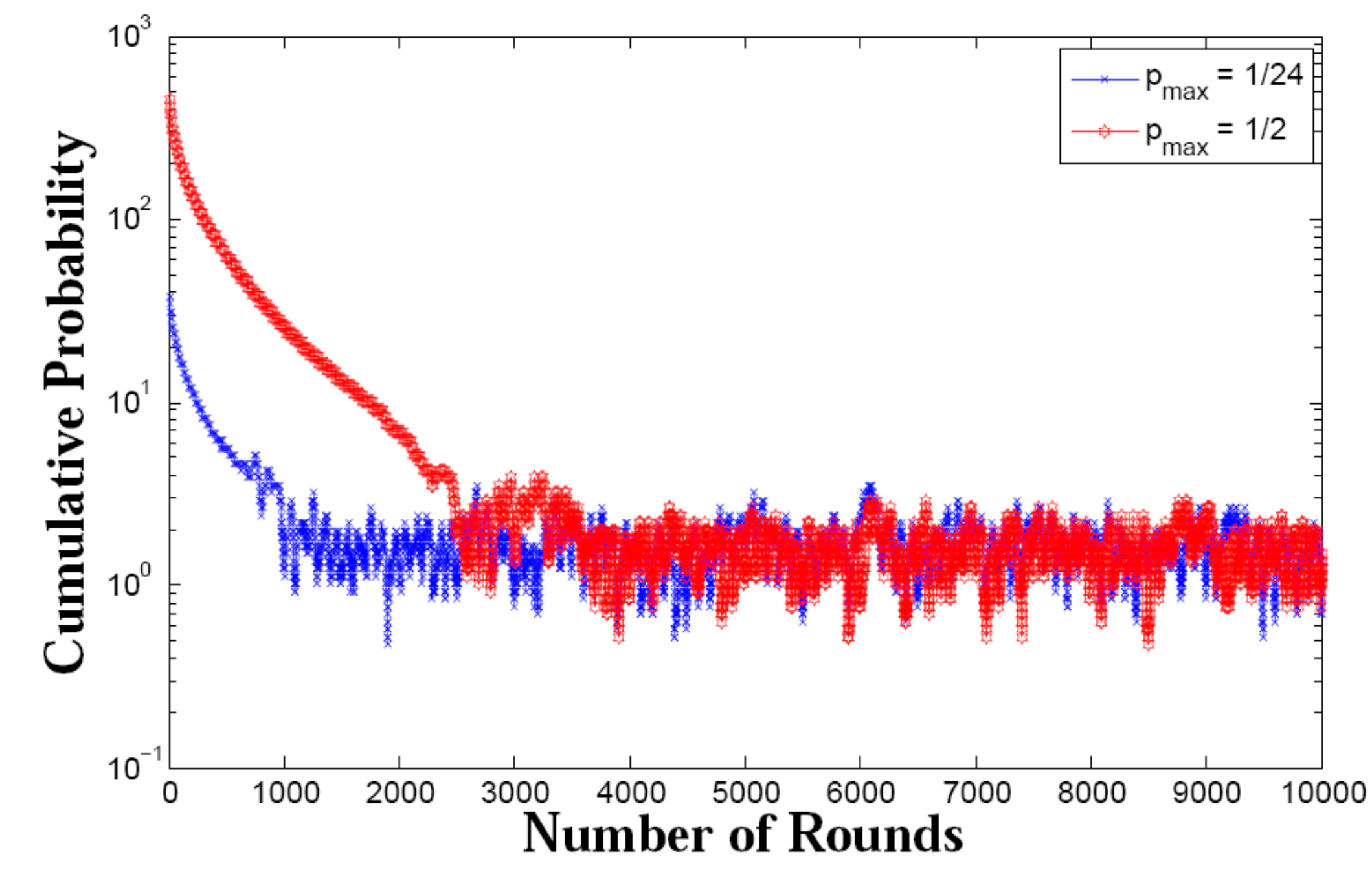}\\
\caption{Evolution of cumulative probability over time (network size
is 1000 nodes, and $\varepsilon = 0.5$). Note that the plot has
logarithmic scale.}\label{fig:convergence}
\end{center}
\end{figure}

\begin{figure} [h]
\begin{center}
\includegraphics[width=0.9\columnwidth]{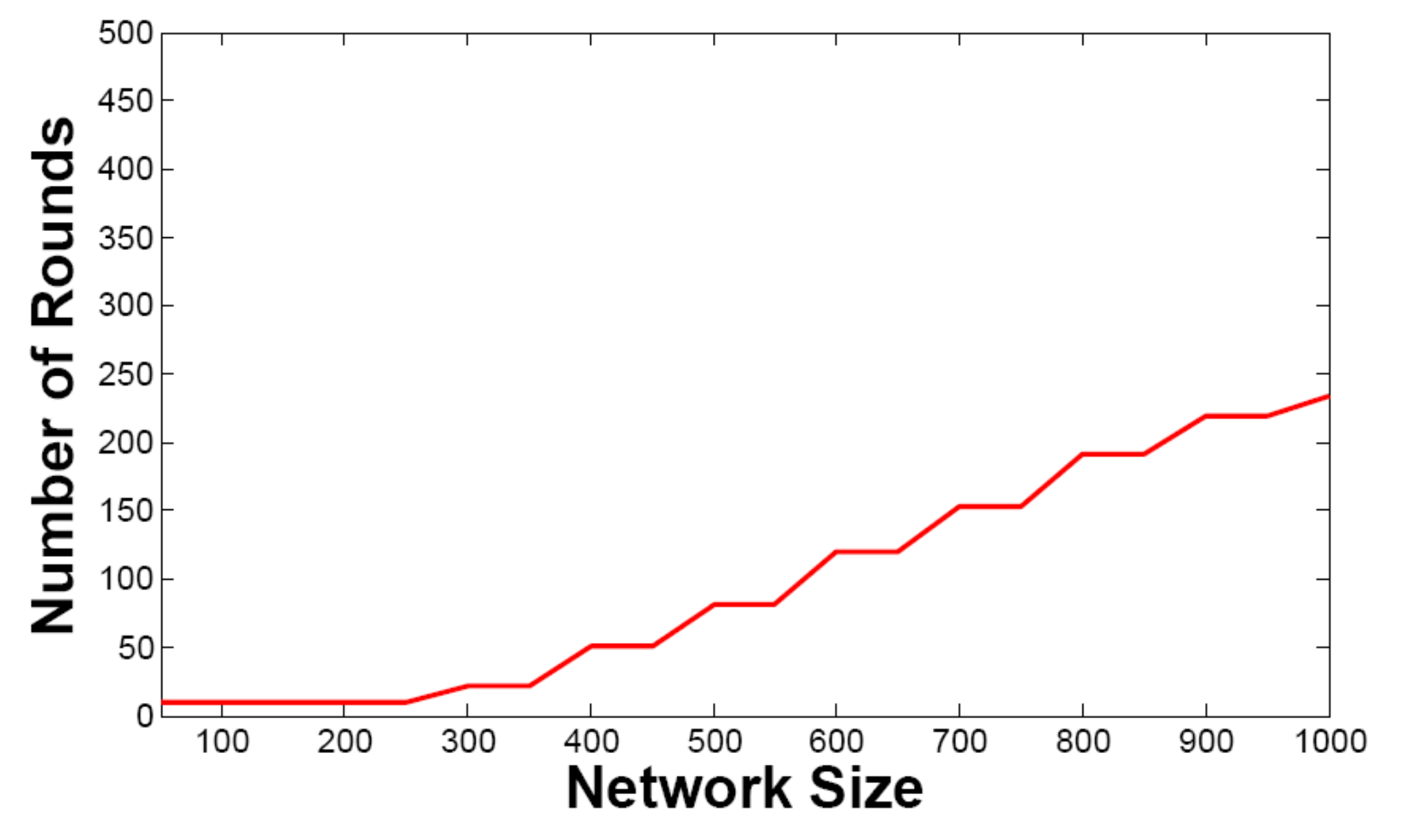}\\
\caption{\textsc{AntiJam} runtime as a function of network size for
$\hat{p}=1/24$, and $\varepsilon =
0.5$.}\label{fig:convergence-over-size}
\end{center}
\end{figure}

Figure~\ref{fig:self-stab} indicates that independently of the
initial values $\hat{p}$ and $T_v$, the throughput rises quickly (up
above $20\%$) and stays there afterwards.

\begin{figure} [h]
\begin{center}
\includegraphics[width=0.9\columnwidth]{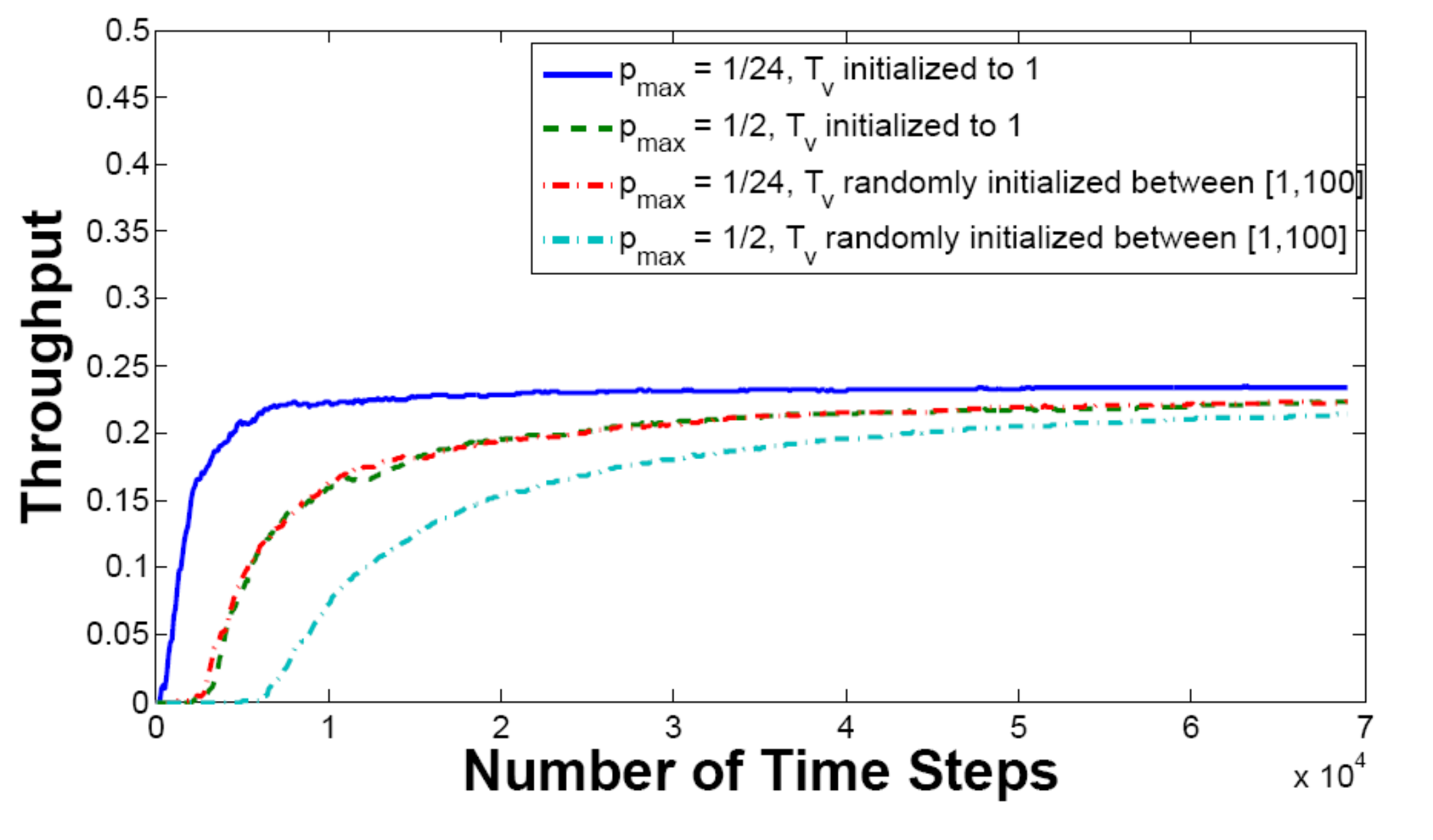}\\
\caption{Convergence in a network of 1000 nodes where $\varepsilon =
0.5$.}\label{fig:self-stab}
\end{center}
\end{figure}

\subsection{Fairness}

As \textsc{AntiJam} synchronizes $c_v$, $T_v$, and $p_v$ values upon
message reception, the nodes are expected to transmit roughly the
same amount of messages; in other words, our protocol is fair.
Figure~\ref{fig:fairness} presents a histogram showing how the
successful transmissions are distributed among the nodes. More
specifically, we partition the number of successful transmissions
into intervals of size $4$. Then, all the transmissions are grouped
according to those intervals in the histogram.
\begin{figure} [h]
\begin{center}
\includegraphics[width=0.9\columnwidth]{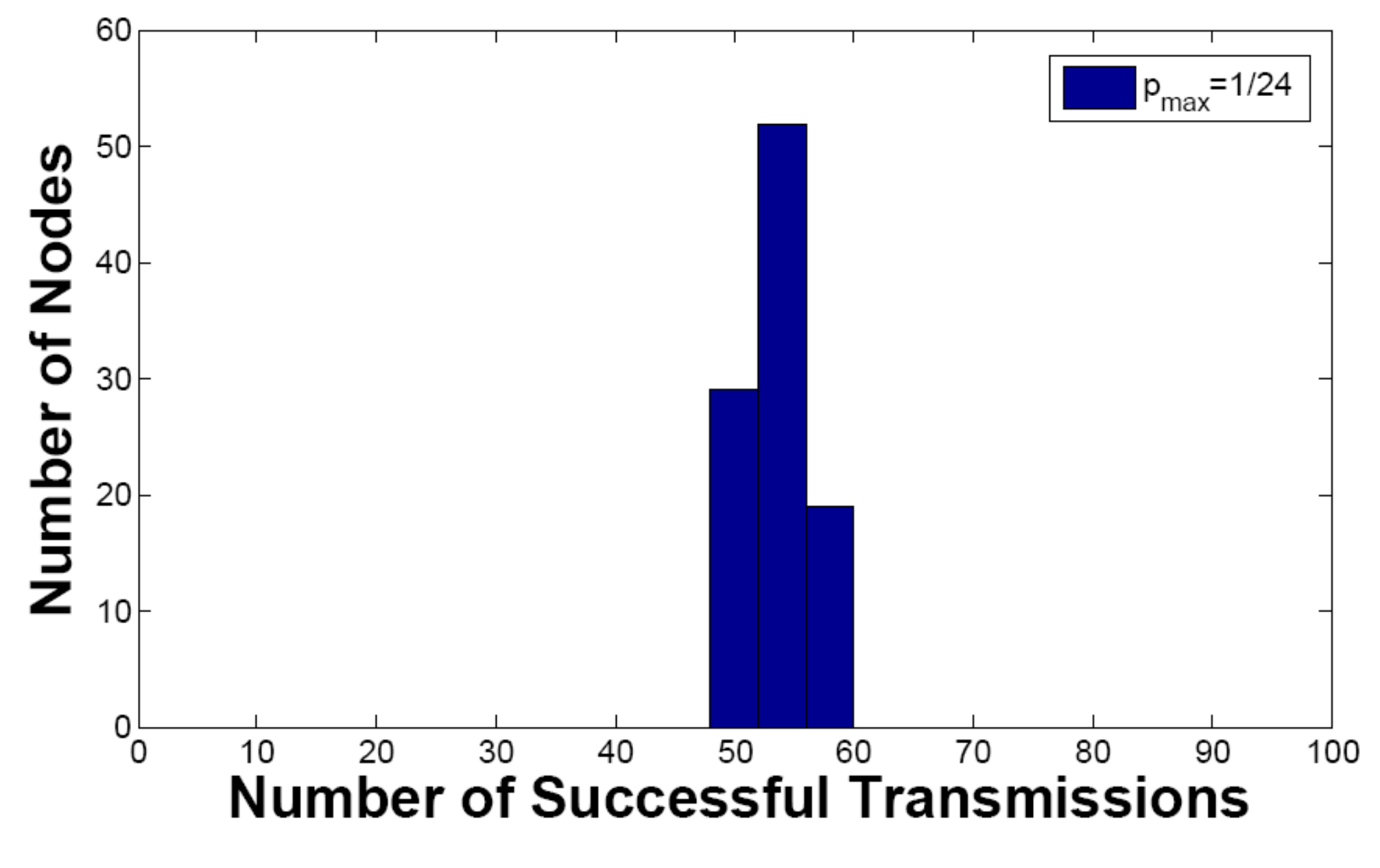}\\
\caption{Fairness in a network of 1000 nodes, where $\varepsilon =
0.5$, and $\hat{p}=1/24$ (averaged over 10
runs).}\label{fig:fairness}
\end{center}
\end{figure}

\subsection{Comparison to 802.11}

Finally, to put \textsc{AntiJam} into perspective, as a comparison,
we implemented a simplified version of the widely used 802.11 MAC
protocol (with a focus on 802.11a).

The configurations for the simulation are the following: (1) the
jammer is reactive and $(T,1-\varepsilon)$-bounded; (2) the unit
slot time for 802.11 is set to $50\mu s$; for simplicity, we define
one time step for \textsc{AntiJam} to be $50\mu s$ also; (3) we run
\textsc{AntiJam} and 802.11 for 4 min, which is equal to $4.8\cdot
10^6$ time steps in our simulation; (4) the backoff timer of the
802.11 MAC protocol implemented here uses units of $50\mu s$; (5) we
omit SIFS, DIFS, and RTS/CTS/ACK.

A comparison is summarized in Figure~\ref{fig:jadeplus-vs-others}.
The throughput achieved by \textsc{AntiJam} is significantly higher
than the one by the 802.11 MAC protocol, specially for lower values
of $\varepsilon$, when the 802.11 MAC protocol basically fails to
deliver any successful message.

\begin{figure} [h]
\begin{center}
\includegraphics[width=0.9\columnwidth]{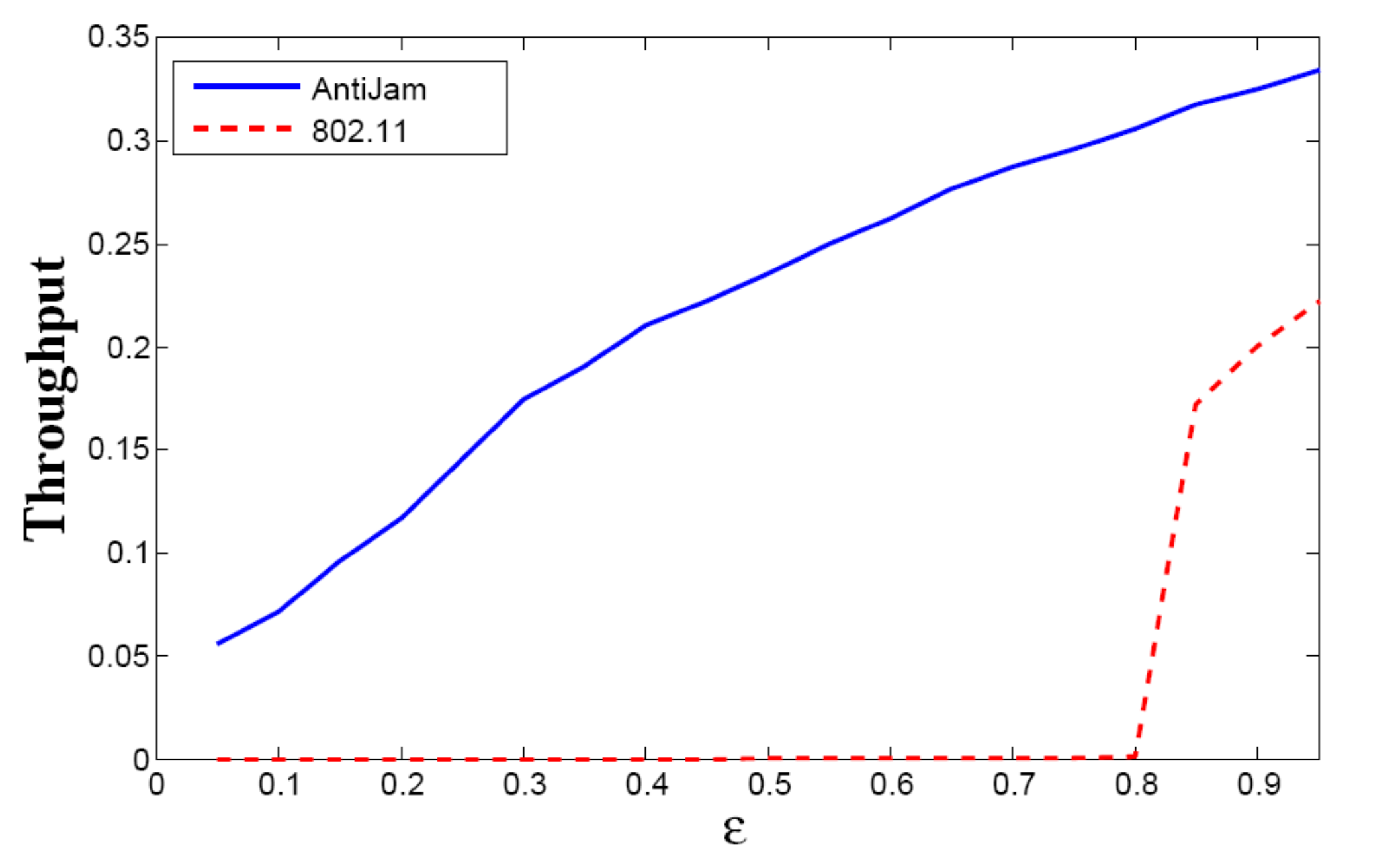}\caption{Throughput as a function of
$\varepsilon \in [0.05, 0.95]$, compared to 802.11, averaged over 10
runs, where $\hat{p} = 1/24$.}\label{fig:jadeplus-vs-others}
\end{center}
\end{figure}

\section{Conclusion}\label{sec:conclusion}

This article presented a simple distributed MAC protocol called
\textsc{AntiJam} that is able to make efficient use of a shared
communication medium whose availability is changing quickly and in a
hard to predict manner over time. In particular, this article has
shown that the MAC protocol is able to achieve a good
(asymptotically optimal) throughput even against an adaptive and
reactive jammer that uses carrier sensing for an informed decision
on when to jam, and whose strategy can depend on the entire protocol
history. Our simulation results indicate that the nodes' access
probabilities converge quickly to a good cumulative value and yields
a fair allocation of the shared medium among the nodes.

{\footnotesize \bibliographystyle{abbrv}
\bibliography{icdcs11}
}

\end{document}